\tikzstyle{_vertex}=[fill=white, circle,minimum size=12pt,inner sep=1pt]
\tikzstyle{_blackv}=[fill=black, circle,minimum size=8pt,inner sep=1pt]
\tikzstyle{_bigvertex}=[fill=white, circle,minimum size=21pt,inner sep=1pt]
\tikzstyle{_arc}=[->, >=stealth]
\newcommand{\IN}{{\mathbb N}}
\newcommand{\IP}{{\mathbb P}}
\newcommand{\ssi}{\Leftrightarrow}
\newcommand{\suchas}{~\big|~}
\newcommand{\ie}{{\emph{i.e., }}}
\newcommand{\DG}{\textsf{Digenes}}
\newcommand{\nor}[1]{\widehat{#1}}
\newcommand{\cnn}[1]{\mathcal{C}_{{#1}}}
\newcommand{\knn}[1]{\mathcal{K}_{{#1}}}
\newcommand{\bnn}[1]{\mathcal{B}_{{#1}}}
\newcommand{\tnn}[1]{\mathscr{T}_{{#1}}}
\newcommand{\gn}{\ensuremath{\mathcal{G}_n}}
\newcommand{\cn}{\ensuremath{\mathcal{C}_n}}
\newcommand{\kn}{\ensuremath{\mathcal{K}_n}}
\newcommand{\pn}{\ensuremath{\mathcal{P}_n}}
\newcommand{\tn}{\ensuremath{\mathscr{T}_n}}
\newcommand{\bn}{\ensuremath{\mathcal{B}_n}}
\newcommand{\hnk}{\ensuremath{{\mathcal{H}_n(k^*)}}}
\newcommand{\hnkk}[2]{\ensuremath{{\mathcal{H}_{#1}(#2)}}}
\newcommand{\hntka}{\mathcal{H}_{n}(T,k,a)}
\newcommand{\inv}{\ensuremath{\mathcal{I}}}
\newcommand{\lset}{{\mathscr{L}}}
\newcommand{\bset}{{\mathscr{B}}}
\newcommand{\posm}[2]{\ensuremath{{\sf P}^{-}_{#1}(#2)}}
\newcommand{\posq}[2]{\ensuremath{{\sf P}^{/}_{#1}(#2)}}
\newtheorem{thm}{Theorem}
\newtheorem{conj}[thm]{Conjecture}
\newtheorem{lem}[thm]{Lemma}
\newtheorem{nott}[thm]{Notation}
\newtheorem{deff}[thm]{Definition}
\newtheorem{pbm}[thm]{Problem}
\theoremstyle{definition}
\newtheorem{remark}{Remark}
\let\mylistof\listof
\renewcommand\listof[2]{\mylistof{algorithm}{Algorithm list}}
\providecommand*{\toclevel@algorithm}{0}
\begin{document}

\title{On price of symmetrisation}
\author{Romain Absil\footnote{Algorithms Lab, Universit\'e de  Mons, Place du parc 20, B-7000 Mons, Belgium.}\ $^{,}$\footnote{FRIA grant holder. Corresponding author. E-mail: {\tt romain.absil@umons.ac.be}.}  \and 
Hadrien M\'elot\footnotemark[1]
}

\maketitle
\vspace*{0.2cm}

\hrule
\vspace*{0.2cm}
\small
\noindent
\textbf{Abstract.} We introduce the \emph{price of symmetrisation}, a concept that aims to compare fundamental differences (gap and quotient) between values of a given graph invariant for digraphs and the values of the same invariant of the symmetric versions of these digraphs. Basically, given some invariant our goal is to characterise digraphs that maximise price of symmetrisation. In particular, we show that for some invariants, as diameter or domination number, the problem is easy. 

The main contribution of this paper is about (partial) results on the price of symmetrisation of the average distance. It appears to be much more intricate than the simple cases mentioned above. First, we state a conjecture about digraphs that maximise this price of symmetrisation. Then, we prove that this conjecture is true for some particular class of digraphs (called bags) but it remains open for general digraphs. Moreover, we study several graph transformations in order to remove some configurations that do not appear in the conjectured extremal digraphs.

\vspace*{0.2cm}
\noindent
\emph{Keywords:} Price of symmetrisation, extremal graph, digraph, average distance.

\vspace*{0.2cm}
\hrule

\normalsize


\section{Introduction}\label{sec:intro}

It is a common question in optimisation and computer science to wonder how an optimal solution would change should we ensure a constraint to hold (or conversely, if we relax a constraint). This can be interpreted by a difference or a ratio between two solutions of related problems. Conceptually, it can be viewed as the price to pay to add or remove a constraint.  For example, approximation factors denote the price of polynomial worst case complexity for optimisation problems \cite{Vazirani01}. Let us also mention \emph{competition ratio}~\cite{Borodin98} in online algorithms, \emph{price of anarchy}~\cite{Koutsoupias99} in game theory and \emph{price of connectivity}~\cite{Cardinal08} in graph theory.

We introduce here a concept that also associate an estimation of the price of adding a constraint, in the context of digraphs. More precisely, we define the \emph{price of symmetrisation} for a digraph $G$ as the gap (or the quotient) between the value of a given graph invariant for a digraph $G$ and the value of the same invariant for the symmetric version of $G$, that is the graph obtained from $G$ by adding an arrow $(j, i)$ (if not already present) when an arrow $(i,j)$ exists in $G$. In other words, given an invariant $\inv$, the price of symmetrisation associated to $\inv$ expresses how would the values of $\inv$ evolve if we force digraphs to be symmetric. A natural motivation in studying prices of symmetrisation is that the symmetric version of a digraph $G$ is conceptually equivalent to the \emph{undirected} one.

In particular, we are interested in extremal problems about prices of symmetrisation~: what (family of) digraphs maximise the price of symmetrisation of a given invariant? On the contrary, literature usually consider that question the other way around, stated as an orientation problem : given an undirected graph $G$ and an invariant \inv, what is the minimum value of $\inv$ among all orientations of $G$. This question was first introduced by Chv\'atal and Thomassen~\cite{Chvatal75} about radius and diameter. Later, many papers \cite{dankelmann04,qian09,koh96,hassin89,gutin02} followed, in this search of a directed graph "almost preserving the value of \inv". Further results about orientations and this question in particular can be found in the book of Bang-Jensen and Gutin~\cite{Bang01}. We are more concerned about finding, among all graphs, what is the directed graph that will make the value of \inv~differ the most from the undirected one. 

After fixing some notations in Section~\ref{sec:not-def}, we give some (easy) solutions to the price of symmetrisation problem in the case of the diameter and the domination number in Section~\ref{sec:price}. The main part of this paper is Section~\ref{sec:avg}, devoted to the study of digraphs that have the maximum price of symmetrisation for the average distance. Experiments with the conjecture-making system \DG\cite{DG} has led to Conjecture~\ref{conj:price-trans} presented later. This conjecture describe, for a given order, which digraph is supposed to be extremal for the price of symmetrisation of the average distance. Despite many efforts, we were not able to prove this conjecture in full generality. However, we show in Section~\ref{subsec:bags} that among some digraphs (called \emph{bags}), this conjecture is true. Moreover, we give results about graph transformations in Sections~\ref{subsec:kill-c2} and \ref{subsec:contr-alg} that show that extremal graphs cannot contain induced $\cnn{2}$ in many cases. We end this section by pointing out paths of research that, if successful, will allow to prove the conjecture in full generality.

\section{Notations and definitions}\label{sec:not-def}

We devote this section to basic definitions and notations used throughout the paper and assume the reader to be familiar with basic notions of graph theory. Otherwise, we refer to the book Bang-Jensen and Gutin~\cite{Bang01} for more details about directed graphs (\emph{digraphs}). In this paper, we only consider digraphs. By abuse of terminology, we will often refer them as graphs.

Let $G = (V, A)$ a simple digraph with vertex set $V$ and arrow set $A$. We denote by $\gn$ the space of all simple non isomorphic digraphs of order $n$. A (graph) \emph{invariant} is a numerical value preserved by isomorphism, such as chromatic number, independence number, diameter and so on. We note $G \simeq H$ if graphs $G$ and $H$ are isomorphic.

We note $|x,y|_G$ the distance between two vertices $x$ and $y$ of a graph $G$. When the context is clear, we will simply note this distance $|x,y|$. Recall that the \emph{diameter} $D(G)$ of a digraph $G$ is the length (in number of arrows) of the longest shortest path of $G$. Moreover, the \emph{transmission} $\sigma(G)$ of $G$ is the sum of all lengths of all shortest paths of $G$. The \emph{average distance} $\mu(G)$ of $G$ is the arithmetic mean of these lengths, \ie $\mu(G) = \frac{\sigma(G)}{n(n-1)} \cdot$ For both these invariants, we assume underlying graphs $G$ to be \emph{strongly connected}, \ie there exists a path between any pair of vertices of $G$. We call an arrow of $G$ a \emph{bridge} if its deletion breaks the strong connectivity of $G$.

On the other hand, we say that a vertex $i$ \emph{dominates} a vertex $j$ if $(i,j) \in A$, and call a \emph{dominating set} $D$ of $G$ a set of vertices such as every vertex $j$ of $V-D$ is dominated by a vertex of $D$. The domination number $\gamma(G)$ of a digraph $G$ is the minimum size of a dominating set of $G$.

Given a digraph $G = (V,A)$, we note $\nor{G}$ the symmetric version of $G$, that is, $\nor{G} = (V, A')$, with $(i,j) \in A'$ if and only if $(i,j) \in A$ or $(j,i) \in A$, for all $i,j \in V$.

A \emph{tournament} is a digraph such as between every pair of vertices $i$ and $j$, there is either an arrow $(i,j)$ or an arrow $(j,i)$ (but not both). We note \cn, \pn~and \kn~the directed cycle, directed path and complete digraph of order $n$, respectively. We note \tn~the set of tournaments of order $n$.

\section{Basics on price of symmetrisation}\label{sec:price}

We define here the \emph{price of symmetrisation} for a digraph $G$. More formally, we define two following types of prices of symmetrisation involving an invariant \inv:
$$\posm{\inv}{G} = | \inv(G) - \inv(\nor{G}) |,$$
and
$$\posq{\inv}{G} = \frac{\inv(G)}{\inv(\nor{G})}.$$

We are then interested in the following extremal problem.

\begin{pbm}\label{pbm:sym}
Let $\inv$ be a graph invariant, what are the digraphs $G$ of order $n$ maximising or minimising  $\posm{\inv}{G}$ or $\posq{\inv}{G}$?
\end{pbm}

We will often only consider maximisation problems since in both cases any symmetric graph has a minimum price of symmetrisation\footnote{We assume $\inv(G) \geqslant 0$ for any graph $G$.}. More particularly,  if $G$ is symmetric, then $\posm{\inv}{G} = 0$ and $\posq{\inv}{G} = 1$.

Note that our definition of the price of symmetrisation is close to a problem introduced by Chv\'atal and Thomassen~\cite{Chvatal75} when $\inv$ is either the radius or the diameter. Indeed, they show that every undirected bridgeless graph of radius $r$ has an orientation of radius at most $r^2 + r$. They also work on a similar problem with diameter. However, our approach is different since we fix the order $n$ and and let the value of $\inv$ free. On the other hand, Chv\'atal and Thomassen work with a fixed value of invariant and let the order free. 
 


As we could have guessed, some prices of symmetrisation are easy. More formally, let \inv\ an invariant and $G^*$ of order $n$ be a graph such that 
$
\inv(G^*) \geqslant \inv(G),
$
and
$
\inv(\nor{G^*}) \leqslant \inv(\nor{G}),
$
for all digraphs $G$ of order $n$. Then, 
$$
\posm{\inv}{G^*} \geqslant \posm{\inv}{G},
$$
and 
$$
\posq{\inv}{G^*} \geqslant  \posq{\inv}{G}.
$$

While in general it is not the case, the two following examples illustrate easy price of symmetrisation problems for standard graph invariants, namely diameter and domination number.

\begin{pbm}\label{pbm:sym-diam}
What are the digraphs $G \in \gn$ maximising $\posm{D}{G}$ and $\posq{D}{G}$?
\end{pbm}

\begin{pbm}\label{pbm:sym-dom}
What are the digraphs $G \in \gn$ maximising $\posm{\gamma}{G}$ and $\posq{\gamma}{G}$?
\end{pbm}

In Problem \ref{pbm:sym-diam}, we of course assume the considered graphs to be strongly connected. For this problem,  as well as for the rest of the document, we need to introduce the notion of \emph{backward tournament}.

\begin{deff}\label{def:backt}
We define the \emph{backward tournament} \bn\ as the digraph $(V,A)$ of order $n \geqslant 3$ such as
\begin{eqnarray*}
V &=& \{v_1, v_2, \dots , v_n\},\\
A &=& B \cup C,\\
B &=& \left\{(v_i, v_{i+1}) \suchas 1 \leqslant i \leqslant n-1 \right\},\\
C &=&\left\{(v_i, v_j) \suchas 3 \leqslant i \leqslant n ~\wedge~ 1 \leqslant j \leqslant i-2 \right\}.
\end{eqnarray*}
\end{deff}

We note that this graph is always a strongly connected tournament, has an hamiltonian shortest path from $v_1$ to $v_n$, and is unique at fixed order $n$. Figure \ref{fig:tourn} illustrates $\bnn{6}$.

\begin{figure}[!htpd]
\begin{center}
\begin{tikzpicture}
\node[draw, _vertex] (v1) at (0, 1.5) {\small $v_1$};
\node[draw, _vertex] (v2) at (1.5, 3) {\small $v_2$};
\node[draw, _vertex] (v3) at (3, 3) {\small $v_3$};
\node[draw, _vertex] (v4) at (4.5, 1.5) {\small $v_4$};
\node[draw, _vertex] (v5) at (3, 0) {\small $v_5$};
\node[draw, _vertex] (v6) at (1.5, 0) {\small $v_6$};
\foreach \i/\j in {1/2, 2/3, 3/4, 4/5, 5/6} {\draw[_arc] (v\i) -- (v\j);}
\foreach \i/\j in {6/4, 6/3, 6/2, 6/1, 5/3, 5/2, 5/1, 4/2, 4/1, 3/1} {\draw[_arc] (v\i) -- (v\j);}
\end{tikzpicture}
\end{center}
\caption{The tournament $\bnn{6}$.}
\label{fig:tourn}
\end{figure}

It well known that, for any graph $G$ of order $n$, we have $1 \leqslant D(G) \leqslant n-1$. Moreover, we notice that $D(\bn)=n-1$ and $D(\nor{\bn})=1$, making the problem of price of symmetrisation involving $D$ easy. Observe that other graphs have such values for the diameter. Indeed, adding arrows $(v_i,v_{i-1})$ in \bn\ does not alter the diameter. We note $\bset_n$ the set of all these graphs.

The following theorem states that, among others, backward tournaments of order $n$ maximise price of symmetrisation involving diameter.

\begin{thm}\label{thm:price-diam}
Let $G \in \gn$ with $n \geqslant 3$. Then, 
\begin{eqnarray}
\label{eq:pr-diam-diff} \posm{D}{G} & \leqslant & n - 2, \\
\posq{D}{G} & \leqslant & n - 1.
\end{eqnarray}
Moreover, equalities hold if and only if $G \in \bset_n$.
\end{thm}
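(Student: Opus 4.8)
The plan is to prove the two inequalities and then characterise the equality cases. For the additive bound \eqref{eq:pr-diam-diff}, first I would observe that $\nor{G}$ is the symmetric version of a strongly connected digraph, hence (viewed as an undirected graph) it is connected on $n$ vertices, so $D(\nor{G}) \geqslant 1$ with equality precisely when $\nor{G}$ is the complete graph, i.e. $G$ is semicomplete (between every pair of vertices there is at least one arrow). On the other side, the standard bound $D(G) \leqslant n-1$ holds for any strongly connected digraph of order $n$. Naively this only gives $\posm{D}{G} = D(G) - D(\nor{G}) \leqslant (n-1) - 1 = n-2$, which is exactly the claimed bound; but one must be a little careful, since if $D(\nor{G}) = 1$ we need $D(G) \leqslant n-1$ (true), and if $D(G) = n-1$ we need $D(\nor{G}) = 1$, which requires an argument. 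So the real content is: \textbf{a strongly connected digraph of order $n$ with $D(G) = n-1$ must be semicomplete.} I would prove this by taking a vertex pair $x,y$ realising the diameter, together with a shortest $x$--$y$ path $P$; since $P$ has $n-1$ arrows it passes through all $n$ vertices, so $P$ is Hamiltonian, say $x = u_1 \to u_2 \to \cdots \to u_n = y$. Then for any $i < j$, the distance from $u_i$ to $u_j$ along $P$ is $j - i$, and this must be the true distance (a shorter path would let us splice a shorter $x$--$y$ walk, contradicting $|x,y| = n-1$); in particular any ``forward chord'' $(u_i, u_j)$ with $j > i+1$ is forbidden. For the ``backward'' pairs: the distance from $u_j$ to $u_i$ (for $i < j$) is at most $n-1$, and combining strong connectivity with the forbidden forward chords forces, for every pair with $j \geqslant i+2$, the presence of the arrow $(u_j,u_i)$ — otherwise the only way back from $u_j$ to $u_i$ would have to route through some $u_k$ with $k \geqslant j$ and then down, but all such down-arrows would themselves need to exist, and a short induction on $j-i$ (starting from the fact that $(u_n, u_1)$, or more generally some arrow from ``high'' to ``low'', must exist) closes this. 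The conclusion is that $G$ contains $\bn$ as a spanning subdigraph and has no forward chords, i.e. $G \in \bset_n$. I expect this ``backward arrows are all forced'' step to be the main obstacle; it is the place where one genuinely uses $D(G) = n-1$ rather than just $D(G) \leqslant n-1$.

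For the multiplicative bound, the argument is parallel but cleaner: $\posq{D}{G} = D(G)/D(\nor{G}) \leqslant (n-1)/1 = n-1$, using $D(\nor{G}) \geqslant 1$ and $D(G) \leqslant n-1$ again. Equality forces simultaneously $D(G) = n-1$ and $D(\nor{G}) = 1$, so exactly the same two conditions as in the additive case, and hence the same characterisation.

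For the equality characterisation, I would argue both directions. ($\Leftarrow$) If $G \in \bset_n$, then by definition $G$ is obtained from $\bn$ by (possibly) adding some arrows $(v_i,v_{i-1})$; these additions do not create any forward chord, so the Hamiltonian path $v_1 \to \cdots \to v_n$ remains a shortest path and $D(G) = n-1$, while $\nor{G} = \nor{\bn} = K_n$ (undirected), so $D(\nor{G}) = 1$; thus $\posm{D}{G} = n-2$ and $\posq{D}{G} = n-1$. ($\Rightarrow$) Conversely, as shown above, attaining either bound forces $D(G) = n-1$ and $D(\nor{G}) = 1$; the first, via the Hamiltonian-path analysis, forces $G$ to have no forward chord and all backward chords of ``span'' $\geqslant 2$, i.e. $G \supseteq \bn$ with the only freedom being the arrows $(v_i, v_{i-1})$, which is exactly the definition of $\bset_n$; the second ($D(\nor{G})=1$) is then automatic. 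One small point to nail down is that $D(\nor{G})=1$ already implies $G$ is semicomplete, so it does not add constraints beyond what $D(G)=n-1$ gives once we know $G \supseteq \bn$ — I would state this explicitly to make the iff clean. The only genuinely delicate bookkeeping is confirming that $\bset_n$, as informally defined in the text (``adding arrows $(v_i,v_{i-1})$ in $\bn$''), is precisely the set of strongly connected digraphs with a Hamiltonian shortest path and no forward chords; I would make that identification precise at the start so the characterisation reads off immediately.
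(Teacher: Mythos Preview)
Your inequality argument is correct and actually simpler than you make it: from $D(G) \leqslant n-1$ and $D(\nor{G}) \geqslant 1$ you immediately get $\posm{D}{G} = D(G) - D(\nor{G}) \leqslant n-2$ (and similarly for the quotient). There is no need to argue that one extreme forces the other; equality in the bound forces \emph{both} $D(G) = n-1$ and $D(\nor{G}) = 1$ simultaneously, and those two conditions together are what must then be analysed.

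The real problem is your bold claim that ``a strongly connected digraph of order $n$ with $D(G) = n-1$ must be semicomplete.'' This is false: the directed cycle $\cn$ has $D(\cn) = n-1$, is strongly connected, and is very far from semicomplete (it has no backward chords except $(v_n,v_1)$). Your induction sketch for forcing the arrows $(u_j,u_i)$ with $j \geqslant i+2$ breaks down precisely here: in $\cn$ the path from $u_j$ back to $u_i$ simply goes forward around the cycle through $u_n \to u_1$, and nothing in the hypothesis $D(G)=n-1$ rules that out. So $D(G)=n-1$ gives you the Hamiltonian shortest path and the absence of forward chords, but \emph{not} the presence of the backward ones.

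The paper's proof uses the second condition, $D(\nor{G}) = 1$, for exactly this purpose: it says directly that between every pair $v_i,v_j$ there is at least one arrow, and since forward chords are excluded by the Hamiltonian shortest path, every non-consecutive pair must carry the backward arrow $(v_j,v_i)$. That is the entire argument for $G \in \bset_n$. In your write-up you treat $D(\nor{G})=1$ as redundant (``does not add constraints beyond what $D(G)=n-1$ gives''), but in fact it is doing the essential work that your proposed induction cannot.
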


\begin{proof}
This proof only considers $\posm{D}{G}$, since the proof involving quotient is quite similar.

Assume $G \in \bset_n$. It is obvious that $G$ maximises the price of symmetrisation, and so that $G$~reaches Inequality \eqref{eq:pr-diam-diff}. Indeed, diameter in $G$~is maximum with a value of $n - 1$, while it is minimum in $\nor{G}$ with a value of 1. 

Let us show now that if a graph $G=(V,A)$ is such as $D(G) = n-1$ and $D(\nor{G}) = 1$, then $G \in \bset_n$. Since $D(G) = n-1$, then there is an elementary shortest path of length $n-1$ in $G$. Let us number this path vertices from $v_1$ to $v_n$, so we have $|v_1,v_n|=n-1$. Clearly, $(v_1,v_n) \notin A$ since otherwise $|v_1,v_n|=1$.

In addition, as $D(\nor{G}) = 1$, there is at least one arrow $(v_i,v_j)$ or $(v_j,v_i)$ between each pair of vertices $v_i$ et $v_j$. However, there can be no arrow $(v_i,v_j)$ with $j>i+1$ since such an arrow would reduce the length of the path from $v_1$ to $v_n$. We can so conclude all of these arrows are of type $(v_i,v_j)$ with $j < i$, and so that $G \in \bset_n$.
\end{proof}

The following notations are useful for solving Problem \ref{pbm:sym-dom}. Let $S_n$ be an undirected star of order $n$, we note $I\!S_n$ an orientation of $S_n$ where every edge of $S_n$ is oriented from its outline to its centre. Moreover, we call this graph an \emph{in-star} of order $n$. The graph on the left of Figure \ref{fig:lset} illustrates an in-star of order 5.


We note $\lset_k(G)$ the set of non isomorphic digraphs that can be obtained from a digraph $G$ by replacing at most $k$ arrows $(u,v)$ by either an arrow $(v,u)$ of by both arrows $(u,v)$ and $(v,u)$. Figure \ref{fig:lset} illustrates $\lset_1(I\!S_5)$.

\begin{figure}[!htpd]
\begin{center}
\begin{tikzpicture}
\node[draw, _vertex] (v1) at (1.5, 1.5) {\small $1$};
\node[draw, _vertex] (v2) at (1.5, 3) {\small $2$};
\node[draw, _vertex] (v3) at (3, 1.5) {\small $3$};
\node[draw, _vertex] (v4) at (1.5, 0) {\small $4$};
\node[draw, _vertex] (v5) at (0, 1.5) {\small $5$};
\foreach \i/\j in {2/1, 3/1, 4/1, 5/1} {\draw[_arc] (v\i) -- (v\j);}
\end{tikzpicture} \hspace{0.5cm}
\begin{tikzpicture}
\node[draw, _vertex] (v1) at (1.5, 1.5) {\small $1$};
\node[draw, _vertex] (v2) at (1.5, 3) {\small $2$};
\node[draw, _vertex] (v3) at (3, 1.5) {\small $3$};
\node[draw, _vertex] (v4) at (1.5, 0) {\small $4$};
\node[draw, _vertex] (v5) at (0, 1.5) {\small $5$};
\foreach \i/\j in {1/2, 3/1, 4/1, 5/1} {\draw[_arc] (v\i) -- (v\j);}
\end{tikzpicture} \hspace{0.5cm}
\begin{tikzpicture}
\node[draw, _vertex] (v1) at (1.5, 1.5) {\small $1$};
\node[draw, _vertex] (v2) at (1.5, 3) {\small $2$};
\node[draw, _vertex] (v3) at (3, 1.5) {\small $3$};
\node[draw, _vertex] (v4) at (1.5, 0) {\small $4$};
\node[draw, _vertex] (v5) at (0, 1.5) {\small $5$};
\foreach \i/\j in {3/1, 4/1, 5/1} {\draw[_arc] (v\i) -- (v\j);}
\foreach \i/\j in {1/2, 2/1} {\draw[_arc] (v\i) to[bend left] (v\j);}
\end{tikzpicture}
\end{center}
\caption{Illustration of $\lset_1(I\!S_5)$.}
\label{fig:lset}
\end{figure}

\begin{thm}
Let $G$ a digraph of order $n \geqslant 3$, we have:\\
\begin{eqnarray}
\label{eq:pr-dom-diff} \posm{\gamma}{G} & \leqslant & n - 2, \\
 \posq{\gamma}{G} & \leqslant & n - 1.
\end{eqnarray}
Moreover, equalities hold if and only if $G \in \lset_1(I\!S_n)$.
\end{thm}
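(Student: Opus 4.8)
I would first pin down $\gamma(G)$ and $\gamma(\nor G)$ by elementary observations, then reduce the equality case to a purely structural statement. Any dominating set of $G$ is also a dominating set of $\nor G$ (an arrow $(i,j)$ of $G$ survives in $\nor G$), so $\gamma(\nor G)\le\gamma(G)$; hence $\posm{\gamma}{G}=\gamma(G)-\gamma(\nor G)$ and $\posq{\gamma}{G}\ge 1$. Moreover $1\le\gamma(H)\le n$ for any digraph $H$ of order $n$, and $\gamma(G)=n$ forces $G$ to be arrowless (if $(i,j)\in A$ then $V\setminus\{j\}$ dominates); so $\gamma(G)\le n-1$ whenever $G$ has an arrow. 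Finally $\gamma(\nor G)=1$ iff $\nor G$ has a vertex adjacent to all others, i.e. iff there is $v$ with $(u,v)\in A$ or $(v,u)\in A$ for every $u\ne v$. Now, if $\gamma(\nor G)\ge 2$ then $\posm{\gamma}{G}\le n-2$ and $\posq{\gamma}{G}\le n/2\le n-1$; if $\gamma(\nor G)=1$, then $\nor G$ (hence $G$) has an arrow, so $\gamma(G)\le n-1$ and again $\posm{\gamma}{G}\le n-2$, $\posq{\gamma}{G}\le n-1$. This proves both inequalities.

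\textbf{Step 2 (reduction of the equality case).} Put $c=\gamma(\nor G)\ge 1$. Equality $\posm{\gamma}{G}=n-2$ gives $\gamma(G)=n-2+c$; since $\gamma(G)\le n$ this forces $c\le 2$, and $c=2$ would make $\gamma(G)=n$, i.e. $G$ arrowless, contradicting $\gamma(\nor G)=2$. Hence $c=1$ and $\gamma(G)=n-1$; a similar computation yields the same conclusion from $\posq{\gamma}{G}=n-1$. Conversely, every $G\in\lset_1(I\!S_n)$ has $\nor G=S_n$, so $\gamma(\nor G)=1$, and a direct check on the three types of graph making up $\lset_1(I\!S_n)$ gives $\gamma(G)=n-1$, so such $G$ attains both bounds. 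It therefore remains to prove: if $\gamma(\nor G)=1$ and $\gamma(G)=n-1$, then $G\in\lset_1(I\!S_n)$.

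\textbf{Step 3 (the structural step).} The tool I would use is the elementary equivalence: $\gamma(G)\le n-2$ iff there exist distinct $x,y$ each having an in-neighbour outside $\{x,y\}$. Contrapositively, when $G$ has an arrow, $\gamma(G)=n-1$ iff $N^-(x)\subseteq\{y\}$ or $N^-(y)\subseteq\{x\}$ for all distinct $x,y$; this forces (i) at most one vertex has in-degree $\ge 2$, and (ii) if $w$ is such a vertex then every other vertex has in-degree $0$, or in-degree $1$ with unique in-neighbour $w$. Now let $v$ be the universal vertex of $\nor G$ and split $V\setminus\{v\}$ into $A_v=N^+_G(v)$ and $B_v=\{u:(u,v)\in A,\ (v,u)\notin A\}$. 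If $B_v=\emptyset$ then, using that $v$ is universal in $\nor G$, $v$ dominates every other vertex and $\gamma(G)=1<n-1$; so $B_v\ne\emptyset$, whence $\{v\}\cup B_v$ dominates $G$ and $n-1=\gamma(G)\le 1+|B_v|$, i.e. $|A_v|\le 1$. Writing $A_v\subseteq\{i\}$, for $n\ge 4$ we have $|N^-(v)|\ge|B_v|\ge n-2\ge 2$, so $v$ is the unique vertex of in-degree $\ge 2$; since $N^+_G(v)=A_v\subseteq\{i\}$, no vertex of $B_v$ has $v$ as in-neighbour, so by (ii) every vertex of $B_v$ has in-degree $0$ and $i$ (if it exists) has in-degree exactly $1$ with $N^-(i)=\{v\}$. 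Hence the arrow set of $G$ is exactly $\{(u,v):u\in B_v\}$ together with $(v,i)$ and possibly $(i,v)$: this is $I\!S_n$, or $I\!S_n$ with the arrow incident to $i$ reversed, or $I\!S_n$ with that arrow doubled, and in each case $G\in\lset_1(I\!S_n)$.

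\textbf{Step 4 (the expected obstacle).} The only delicate point is Step 3, and its soft spot is small order: for $n=3$ the universal vertex of $\nor G$ may have in-degree $1$ rather than $\ge 2$ (all in-degrees can equal $1$), so it need not be the vertex $w$ of (ii) and the argument above breaks down. Indeed $\cnn 3$ has $\gamma(\cnn 3)=2$ and $\gamma(\nor{\cnn 3})=\gamma(\knn 3)=1$, hence attains $\posm{\gamma}{\cnn 3}=1=n-2$ although $\cnn 3\notin\lset_1(I\!S_3)$; so the characterisation should be read for $n\ge 4$, with the order-$3$ case either excluded or settled by inspecting the handful of relevant digraphs. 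Apart from this small-order subtlety, the proof is routine bookkeeping with in-neighbourhoods.
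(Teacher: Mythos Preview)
Your argument is correct and, in fact, more careful than the paper's own. The paper takes a dual route: it observes that $\gamma(G)=n-1$ forces every vertex to have \emph{out}-degree at most~$1$ (otherwise $V$ minus two out-neighbours of a single vertex would already dominate), combines this with the existence of a universal vertex $v$ in $\nor{G}$, and then asserts in one sentence that ``these conditions lead to the conclusion that $G\in\lset_1(I\!S_n)$''. Your approach via the \emph{in}-neighbourhood characterisation --- at most one vertex $w$ can have $|N^-(w)|\ge 2$, and every other vertex then has its in-neighbour set contained in $\{w\}$ --- is more explicit and pins down the full arrow set without leaving a gap. The two ideas are essentially dual, but yours makes the final structural step transparent where the paper's remains compressed.

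The substantive contribution is your Step~4. The counterexample $\cnn{3}$ is genuine: $\gamma(\cnn{3})=2$ and $\gamma(\nor{\cnn{3}})=\gamma(\knn{3})=1$, so both bounds are attained, yet $\cnn{3}\notin\lset_1(I\!S_3)$ since its underlying undirected graph is a triangle rather than a star. The paper's proof does not detect this; when one tries to make its last sentence precise, one implicitly needs $|B_v|\ge 2$ (hence $n\ge 4$) to force uniqueness of the universal vertex and to rule out a stray out-arrow from the single vertex of $A_v$. So your observation actually corrects the statement: the ``if and only if'' clause holds as written only for $n\ge 4$, with $\cnn{3}$ an additional extremal graph at $n=3$.
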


\begin{proof} Again we only deal with $\posm{\gamma}{G}$. Clearly, Inequality \eqref{eq:pr-dom-diff} is verified since graphs of $\lset_1(I\!S_n)$ maximise price of symmetrisation involving domination number. Indeed, $\gamma$ is maximum for graphs in $\lset_1(I\!S_n)$ : since at most one arrow of $I\!S_n$ has been replaced by either a reversed arrow or a $\cnn{2}$, we still need to pick all vertices of the star but its centre to build a (minimum) dominating set. In the symmetric version of these graphs, domination number is always minimum since it is enough to pick the star centre as dominating set.

Let us show now that if equality in Inequation~\eqref{eq:pr-dom-diff} holds, then $G \in \lset_1(I\!S_n)$. The only case when it happens is when $\gamma(G) = n-1$ and $\gamma(\nor{G}) = 1$. As $\gamma(G) = n-1$, no vertex can have an out-degree higher than $1$. Indeed, if such a vertex existed, then it would dominate more than one vertex (and itself) and so $\gamma(G)$ would be lower than $n-1$. Moreover, as $\gamma(\nor{G}) = 1$, there must be a vertex $v$ in $G$ such as for every vertex $w \neq v$ in $G$, there is an arrow $(v,w)$ or $(w,v)$ or both arrows $(v,w)$ and $(w,v)$. Combining this statement with the previous one, there can be only one such vertex. These conditions lead to the conclusion that $G \in \lset_1(I\!S_n)$.
\end{proof}


\section{Price of symmetrisation and average distance} \label{sec:avg}

As stated in the beginning of Section \ref{sec:price}, in the general case of prices of symmetrisation, there is no graph $G$ maximising $\inv(G)$ while minimising $\inv(\nor{G})$ at the same time, for instance with average distance (or transmission). In this case, it is well known that while \cn~maximises $\mu$ and $\sigma$ \cite{doyle02}, the graph $\kn$~minimises them. 

The following section introduces the problem and describes progress made in order to prove Conjecture \ref{conj:price-trans}, stated later.

\begin{pbm}
What are the $G \in \gn$ maximising $\posm{\mu}{G}$ or $\posq{\mu}{G}$ ? 
\end{pbm}

We consider here only $\posm{\mu}{G}$, and as for diameter assume considered graphs to be strongly connected. Moreover, since $\mu$ and $\sigma$ are the same invariant up to a $\frac{1}{n(n-1)}$ factor, we will only consider $\sigma$ to avoid some fractions in computations. 

\begin{deff}
A \emph{bag} is a tournament $T$ of order $\geqslant 3$ in which you duplicated an arrow and replaced the copy by a path $P$ of arbitrary length $\geqslant 2$. Figure \ref{fig:bag} illustrates an example of bag.

Let $T \in \tnn{k}$ with $3 \leqslant k  \leqslant n-1$ and $a \in A(T)$ , we note $\hntka$ the bag of order $n$ obtained when duplicating the arrow $a$ of $T$ and replacing it by a path of length $n-k+1$. 

More particularly, if $T \simeq \bnn{k}$, when using the same notations as Definition \ref{def:backt}, we note $\hnkk{n}{k}$ the bag of order $n$ obtained when duplicating arrow $(v_k,v_1)$ and replacing it by a path of length $n-k+1$. The graph illustrated in Figure \ref{fig:bag} is $\hnkk{8}{4}$.

\end{deff}

The following conjecture intuitively means it is either a cycle or a bag maximising price of symmetrisation, where the tournament size is roughly $40\%$ of the bag's order. This conjecture was output by the system \DG\cite{DG}.

\begin{conj}\label{conj:price-trans}
Let $G \in \gn$ strongly connected, for $2 \leqslant n \leqslant 10$ we have 
\begin{equation*}
\posm{\sigma}{G} \leqslant \posm{\sigma}{\cn},
\end{equation*}
with equality if and only if $G \simeq \cn$. Moreover, for $n \geqslant 11$, we have
\begin{equation*}
\posm{\sigma}{G} \leqslant \posm{\sigma}{\hnk},
\end{equation*}
where
\begin{equation*}
k^* = \max (\posm{\sigma}{\hnkk{n}{\lfloor r \rfloor)}}, \posm{\sigma}{\hnkk{n}{\lceil r \rceil)}}),
\end{equation*}
\begin{equation*}
r = n (\sqrt{2} - 1) + 8 - \frac{11\sqrt{2}}{2},
\end{equation*}
with equality if and only if $G \simeq \hnk$. 
\end{conj}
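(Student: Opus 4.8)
The plan is to work throughout with the equivalent quantity $\posm{\sigma}{G} = \sigma(G) - \sigma(\nor{G})$ without the absolute value: since $\nor{G}$ is obtained from $G$ only by adding arrows, every shortest path of $\nor{G}$ is at most as long as the corresponding one of $G$, so $\sigma(\nor{G}) \leqslant \sigma(G)$. Maximising the price of symmetrisation therefore means making the directed transmission large while the transmission of the underlying symmetric graph stays small. The directed cycle $\cn$ is the unique maximiser of $\sigma$ over strongly connected digraphs \cite{doyle02}, but $\nor{\cn}$ is the undirected cycle, whose transmission is itself of order $n^3$; a bag $\hntka$ trades away some directed transmission, but its symmetrisation is $\knn{k}$ with a pendant path, which is far more compact. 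The conjecture asserts that the balance tips at $n = 11$.

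I would organise the proof into a structural part and an optimisation part. For the structural part, one first reduces to digraphs with no induced $\cnn{2}$: a digon contributes nothing to $\nor{G}$, so deleting it (or, when that breaks strong connectivity, rerouting around it) does not decrease the gap, which is exactly what the transformations of Sections~\ref{subsec:kill-c2} and~\ref{subsec:contr-alg} achieve. One then argues that among $\cnn{2}$-free strongly connected digraphs the extremal one is either $\cn$ or a bag $\hntka$ whose tournament is a backward tournament $\bnn{k}$ with the duplicated arc taken to be $(v_k,v_1)$: all \emph{thin} parts can be merged into a single induced directed path; the \emph{dense} part can be taken to be a tournament core; the choice of tournament affects only $\sigma(\hntka)$, since every tournament symmetrises to $\knn{k}$; and among tournaments the bag's directed transmission is maximised by $\bnn{k}$ with the Hamiltonian shortest path $v_1 \to \cdots \to v_k$, because this makes $|v_1,v_k|_T$ and hence every return trip a path vertex must make through the core as long as possible.

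For the optimisation part, with the shape pinned to $\hnkk{n}{k}$ I would compute $\sigma(\hnkk{n}{k})$ and $\sigma(\nor{\hnkk{n}{k}})$ explicitly by partitioning the ordered vertex pairs into core--core, path--path and core--path classes and using the closed distance formulas in $\bnn{k}$ and in $\knn{k}$ with a pendant path (a forward path vertex is reached in $b-a$ steps, a backward one at cost $n-a+b$ by the long return trip, and so on). For each parity of $n-k$ this makes $\posm{\sigma}{\hnkk{n}{k}}$ a cubic polynomial in $k$; writing $m = n-k$, setting its derivative to zero leads, to leading order, to $m^2 - 4nm + 2n^2 = 0$, with admissible root $m = n(2-\sqrt{2})$, i.e. $k \approx n(\sqrt{2}-1)$, and keeping the lower-order terms turns this into $r = n(\sqrt{2}-1) + 8 - \tfrac{11\sqrt{2}}{2}$. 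Since $\lfloor r\rfloor$ and $\lceil r\rceil$ have opposite parities they lie on the two different cubics, so both must be evaluated and the larger gives $k^*$; strict concavity of the cubics near $r$ yields uniqueness of the extremal digraph. Finally, comparing the leading cubic of $\posm{\sigma}{\hnk}$, whose leading coefficient exceeds $\tfrac14$, with $\posm{\sigma}{\cn}$, one checks that $\cn$ wins exactly for $2 \leqslant n \leqslant 10$ and the bag for $n \geqslant 11$.

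The hard part is the second half of the structural argument: showing that the extremal $\cnn{2}$-free digraph has no structure beyond a tournament core and a single induced path. Eliminating digons, merging two parallel thin paths, and replacing an arbitrary core tournament by $\bnn{k}$ are local exchange arguments, and the $k$-optimisation is routine if tedious calculus; but ruling out \emph{all} remaining configurations simultaneously --- several attached ears, a strongly connected core that is not a tournament, chords inside the long path, and combinations of these --- is precisely where the transformations of Sections~\ref{subsec:kill-c2}--\ref{subsec:contr-alg} fall short of a complete proof, which is why the conjecture is still open for general digraphs.
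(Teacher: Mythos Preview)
The statement is a \emph{conjecture}; the paper does not prove it, and explicitly says so. There is therefore no ``paper's own proof'' to compare against, only the partial results of Section~\ref{sec:avg}. Your proposal is not a proof either --- it is a proof \emph{strategy}, and you say as much in your final paragraph.

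That said, your strategy lines up almost exactly with the paper's programme. The optimisation part you describe (compute $\sigma(\hnkk{n}{k})$ and $\sigma(\nor{\hnkk{n}{k}})$ as explicit polynomials in $k$, differentiate, locate the root near $r = n(\sqrt{2}-1)+8-\tfrac{11\sqrt{2}}{2}$, and compare with $\cn$ to find the crossover at $n=11$) is precisely what the paper carries out in Lemmas~\ref{lem:cn}--\ref{lem:hnk} and the proof of Theorem~\ref{thm:bags}. Your claim that among bags the backward tournament $\bnn{k}$ with duplicated arc $(v_k,v_1)$ is optimal is the content of Inequality~\eqref{eqn:hnka}, proved via Lemma~\ref{lem:back-max}. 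Your $\cnn{2}$-elimination step is Section~\ref{subsec:kill-c2}, and your ``merge thin parts into a single induced path'' idea is the contraction--insertion transformation of Section~\ref{subsec:contr-alg}.

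The genuine gap you identify --- reducing an arbitrary $\cnn{2}$-free strongly connected digraph to a bag --- is exactly the gap the paper leaves open. The paper is more specific than you are about \emph{where} the known transformations fail: Lemma~\ref{lem:break-c2-no-hyp} does not handle a pending $\cnn{2}$ attached at a vertex $x$ with both $\sigma(x,X)$ and $\sigma(X,x)$ maximal, and the transformation $T_1$ of Algorithm~\ref{alg:contr} can terminate on graphs with \emph{bunches} (several internally disjoint induced paths with common endpoints) rather than on bags. Your sketch glosses over both of these concrete obstructions; the paper names them as the outstanding problems. So your proposal is a faithful summary of the intended approach, but it does not advance beyond what the paper already establishes, and the conjecture remains open.
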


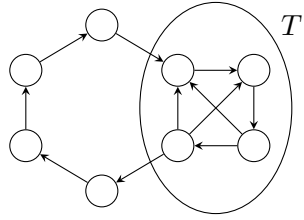
\begin{figure}[!htpd]
\begin{center}
\begin{tikzpicture}
\node[draw, _vertex] (v1) at (1, -0.6) {};
\node[draw, _vertex] (v2) at (0, 0) {};
\node[draw, _vertex] (v3) at (0, 1) {};
\node[draw, _vertex] (v4) at (1, 1.6) {};
\node[draw, _vertex] (v5) at (2, 1) {};
\node[draw, _vertex] (v6) at (3, 1) {};
\node[draw, _vertex] (v7) at (3, 0) {};
\node[draw, _vertex] (v8) at (2, 0) {};
\foreach \i/\j in {1/2, 2/3, 3/4, 4/5, 8/5, 8/1, 5/6, 6/7, 7/8, 7/5, 8/6} {\draw[_arc] (v\i) -- (v\j);}
\draw (2.5,0.5) ellipse (1 and 1.4);
\draw (3.5, 1.6) node {$T$};
\end{tikzpicture}
\end{center}
\caption{An example of bag with tournament $T$ of order $4$ and a path of length $5$.}
\label{fig:bag}
\end{figure}


A natural idea in an attempt to prove this conjecture is to proceed by graph transformations, \ie find a sequence of graph transformations that will eventually end up on \hnk, while always increasing price of symmetrisation. 

The rest of this document is devoted to this transformations approach, and organised in the following way. Firstly, Section \ref{subsec:bags} states that among all bags, $\hnk$ has a maximum price of symmetrisation. In this section, we also deal with the case of cycles when $n  \leqslant 10$. This way, we will be directly able to conclude that Conjecture \ref{conj:price-trans} is true if some transformation sequence ends up on some bag while increasing price of symmetrisation. Section \ref{subsec:kill-c2} proposes various graph transformations in order to remove induced $\cnn{2}$ from most of graphs $G$ while increasing $\posm{\sigma}{G}$. Indeed, there is no such configuration in the assumed extremal graph. Finally, Section \ref{subsec:contr-alg} proposes a simple transformation ending up either on a bag or on some other particular configuration while increasing price of symmetrisation.

The following notation will be used in these three sections and is useful to easily compute $\sigma$ in some graphs.

\begin{nott}
Let $G=(V,A) \in \gn$ strongly connected, $X,Y \subseteq V$ disjoints non empty and $v \in V$, we note\\
\begin{minipage}{6cm}
\begin{eqnarray*}
\sigma_G(X) &=& \displaystyle \sum_{x,y \in X} |x,y|_G,  \\
\sigma_G(X,Y) &=& \displaystyle \sum_{x \in X,~y \in Y} |x,y|_G,\\
\end{eqnarray*}
\end{minipage}
\begin{minipage}{6cm}
\begin{eqnarray*}
\sigma_G(X,v) &=& \displaystyle \sum_{x \in X} |x, v|_G,\\
\sigma_G(v,X) &=& \displaystyle \sum_{x \in X} |v, x|_G.
\end{eqnarray*}
\end{minipage}
\end{nott}

When the context is clear, we will often omit $G$ in the previous notation. Moreover, by abuse of terminology, if $H$ is a subgraph of $G$, we note  $\sigma_G(H) = \sigma_G(V(H))$.

\subsection{Dealing with bags}\label{subsec:bags}


What we formally need to prove in order to state that, among bags of order $n$, $\hnk$ has a maximum price of symmetrisation is the following theorem.


\begin{thm}\label{thm:bags}
Let $T \in \tnn{k}$ with $k \geqslant 3$ and $n \geqslant 11$,
\begin{eqnarray}
\posm{\sigma}{\hntka} & \leqslant & \posm{\sigma}{\hnkk{n}{k})},\label{eqn:hnka}\\
\posm{\sigma}{\hnkk{n}{k}} & \leqslant & \posm{\sigma}{\hnk}),\label{eqn:hnk}
\end{eqnarray}
for all $a \in A(T)$. Equality \eqref{eqn:hnk} holds if and only if $k = k^*$.
\end{thm}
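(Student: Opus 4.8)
The plan is to compute $\posm{\sigma}{\hntka}$ explicitly as a function of the transmission $\sigma(T)$ of the underlying tournament, the order $k$ of $T$, the chosen arrow $a = (u,w)$, and the order $n$ of the bag. The key observation is that in a bag $\hntka$ the symmetric version $\nor{\hntka}$ has a very simple structure: symmetrising the tournament $T$ gives a graph of diameter at most $2$ (in fact the complete symmetric digraph on the $k$ tournament vertices once the internal path arrows are bidirected provides all needed shortcuts), so $\sigma(\nor{\hntka})$ depends only on $n$, $k$, and the endpoints of $a$ in a controlled way. Thus the whole price of symmetrisation splits as a contribution from the path $P$ (the $n-k+1$ new arrows and their $n-k$ internal vertices) plus a contribution from $T$ itself.

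First I would set up notation for the path $P = (u = p_0, p_1, \dots, p_{\ell} = w)$ with $\ell = n-k+1$, and partition the vertex pairs of $\hntka$ into: pairs inside $V(T)$, pairs inside the internal path vertices $\{p_1,\dots,p_{\ell-1}\}$, and mixed pairs. For each class I would write down $\sigma_{\hntka}$ and $\sigma_{\nor{\hntka}}$ separately, using Notation (the $\sigma_G(X)$, $\sigma_G(X,v)$ etc.). Distances from/through the path in $\hntka$ are forced (you must traverse arrows of $P$ in the forward direction or route through $T$), which makes these sums elementary quadratics in $\ell$; distances in $\nor{\hntka}$ are small and uniform. Subtracting, $\posm{\sigma}{\hntka}$ becomes an affine function of $\sigma(T)$ plus an explicit polynomial in $k$ and $n$. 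Because $\sigma(T)$ enters with a fixed sign, maximising over $T \in \tnn{k}$ reduces to a known extremal fact about tournaments: among strongly connected tournaments of order $k$, transmission is extremised (in the direction we need) by the backward tournament $\bnn{k}$, which has a Hamiltonian shortest path and is the unique tournament of diameter $k-1$. This yields Inequality \eqref{eqn:hnka}, with the dependence on $a$ absorbed because the backward tournament, together with the canonical choice $a = (v_k,v_1)$, already maximises the boundary term; one checks the other choices of $a$ only change lower-order terms in the wrong direction.

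Next, having reduced to $T = \bnn{k}$, I would substitute $\sigma(\bnn{k})$ (a closed form: the Hamiltonian path contributes $\binom{k}{2}$-type terms and all backward arrows give distance $1$ or $2$) and obtain $f(k,n) := \posm{\sigma}{\hnkk{n}{k}}$ as an explicit function — I expect a quadratic in $k$ with negative leading coefficient once $n$ is fixed, since lengthening the path at the expense of the tournament trades a long induced path (large $\sigma$ gain) against a larger tournament (also large $\sigma$ gain in $\hnkk{n}{k}$ but larger loss in the symmetric version). Maximising the continuous relaxation of $f(k,n)$ in $k$ gives the critical value $r = n(\sqrt 2 - 1) + 8 - \tfrac{11\sqrt2}{2}$ quoted in Conjecture~\ref{conj:price-trans}; since $k$ must be an integer, the maximum is attained at $\lfloor r\rfloor$ or $\lceil r\rceil$, i.e. at $k = k^*$, which is exactly Inequality \eqref{eqn:hnk} and its equality condition. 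The bound $n \geq 11$ is what guarantees $r \geq 3$ so that the optimal $k$ is a legal tournament order.

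**Main obstacle.** The routine-looking but genuinely delicate step is the exact bookkeeping of distances in $\hntka$ for the mixed pairs and the dependence on the arrow $a$: a shortest path from an internal path vertex $p_i$ to a tournament vertex $x$ may either exit $P$ at $w$ and then move inside $T$, or — if $x$ happens to lie "before" $u$ in a useful sense — there may be no shortcut, so the distance is $\ell - i$ plus a $T$-distance, and these cases must be summed carefully. Getting the coefficient of $\sigma(T)$ and the constant exactly right is what makes the final quadratic — and hence the precise constant $8 - \tfrac{11\sqrt2}{2}$ — come out correctly; an error of a linear term in $k$ would shift $r$. So the hard part is not conceptual but the precise, case-split distance accounting that feeds the optimisation.
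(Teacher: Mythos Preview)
Your high-level strategy matches the paper's: observe that $\nor{\hntka}$ is the same graph for every $T\in\tnn{k}$ and every $a$ (since any tournament symmetrises to $\knn{k}$), so Inequality~\eqref{eqn:hnka} reduces to maximising $\sigma(\hntka)$ alone; then use the extremality of $\bnn{k}$ (the paper's Lemma~\ref{lem:back-max}); then compute explicitly and optimise over $k$. Two places in your execution do not go through, however.

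\textbf{The decomposition is not affine in $\sigma(T)$.} Even when $T$ is strongly connected and the path $P$ creates no shortcut inside $T$, the quantity $\sigma(\hntka)$ depends on $T$ through \emph{several} parameters, not just $\sigma(T)$: the backward distances inside $\underline P$ depend on $|w,u|_T$, and the mixed sums $\sigma_G(T,\underline P)$, $\sigma_G(\underline P,T)$ depend on $\sigma_T(T,u)$ and $\sigma_T(w,T)$. Your plan would therefore require a simultaneous optimisation over all of these, together with the choice of $a=(u,w)$, and you would still have to treat separately the case where $T$ is not strongly connected (which is allowed as long as the bag is). The paper bypasses this multi-parameter problem with a single observation: for any ordered pair $v_i,v_j$ in the bag one has $|v_i,v_j|_G+|v_j,v_i|_G\leqslant n$, because an arrow $(v_i,v_j)$ together with a shortest $v_j$--$v_i$ path closes into a simple cycle of length at most $n$. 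This bound applies uniformly to pairs in $T$, pairs in $\underline P$, and mixed pairs; the equality analysis forces $|v_1,v_k|=k-1$ with every arrow of $T$ lying on that path, which is exactly $\bnn{k}$ with $a=(v_k,v_1)$. That pairwise ``sum $\leqslant n$'' bound is the missing idea in your sketch and is what makes the role of $a$ fall out automatically rather than requiring the ad hoc ``lower-order terms in the wrong direction'' check you propose.

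\textbf{The function of $k$ is cubic, not quadratic.} After substituting $T=\bnn{k}$, one gets (Lemma~\ref{lem:hnk} in the paper)
\[
\posm{\sigma}{\hnkk{n}{k}}=-\tfrac{1}{12}k^{3}+\tfrac{8-n}{4}\,k^{2}+\tfrac{3n^{2}-18n-20}{12}\,k+\text{(terms in $n$)},
\]
with a parity correction when $n-k$ is odd. The $\sqrt{2}$ in $r$ comes from solving the \emph{quadratic derivative} $\partial_k\posm{\sigma}{\hnkk{n}{k}}=0$; a quadratic objective would give a rational optimum. This is a calculational slip rather than a structural one, and would surface as soon as you actually carried out the sums, but it means the optimisation step as you describe it is not yet correct.
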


This way, as previously stated, if a sequence of transformations ends up on some bag $\hntka$, we know it is not extremal since it has a lower price of symmetrisation than $\hnkk{n}{k}$, itself having a lower price of symmetrisation than \hnk.

To prove Theorem \ref{thm:bags}, we need some additional preliminary results. Firstly, we need to prove the following lemma :

\begin{lem}\label{lem:back-max}
Let $T \in \tn$ strongly connected, we have 
\begin{equation*}
\sigma(T) \leqslant \displaystyle \sum_{i=2}^n \binom{i+1}{2},
\end{equation*}
with equality if and only if $T \simeq \bn$.
\end{lem}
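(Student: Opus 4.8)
The plan is to prove the bound by induction on $n$, peeling off one vertex at a time from a strongly connected tournament. Recall that $\sigma(\bn)$ should equal $\sum_{i=2}^n \binom{i+1}{2}$; it is worth checking this closed form first by observing that in $\bn$ the distances are easy to read off from the Hamiltonian shortest path $v_1 \rightsquigarrow v_n$ together with the ``backward'' arrows: for a pair $v_i, v_j$ with $i<j$, we have $|v_j,v_i|=1$ (there is a direct backward arrow when $j\geqslant i+2$, and $(v_2,v_1)$ is forced by strong connectivity... wait, in $\bn$ itself $|v_2,v_1|=2$ via $v_2\to v_3\to v_1$) and $|v_i,v_j|=j-i$. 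Summing gives the stated formula, and this computation simultaneously isolates \emph{why} $\bn$ is extremal: every forward pair contributes the maximum possible $j-i$ because no chord shortcuts the Hamiltonian path, while every backward pair contributes essentially $1$.

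First I would set up the induction. For the base case $n=3$, the only strongly connected tournament is $\cnn{3}=\bnn{3}$, and the formula checks out. For the inductive step, take a strongly connected tournament $T$ of order $n$. The key structural fact I would invoke is that a strongly connected tournament on $n\geqslant 3$ vertices has a Hamiltonian path, and moreover one can choose a vertex whose removal leaves a strongly connected tournament on $n-1$ vertices (this is a standard fact — strongly connected tournaments of order $\geqslant 4$ contain a vertex that is not a ``cut vertex'' in the relevant sense; alternatively use that every strong tournament is vertex-pancyclic). Delete such a vertex $v$ to get a strong tournament $T'$ of order $n-1$. Then
\begin{equation*}
\sigma(T) = \sigma(T') + \sigma_T(v, V\setminus\{v\}) + \sigma_T(V\setminus\{v\}, v) + \big(\sigma_T - \sigma_{T'}\big)\big|_{\text{pairs in }V\setminus\{v\}},
\end{equation*}
and the last term is nonpositive since adding a vertex can only create shortcuts, never lengthen existing shortest paths, so $\sigma(T) \leqslant \sigma(T') + \sigma_T(v,V\setminus\{v\}) + \sigma_T(V\setminus\{v\},v)$. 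By induction $\sigma(T')\leqslant \sum_{i=2}^{n-1}\binom{i+1}{2}$, so it suffices to bound the contribution of $v$ to the other $n-1$ vertices (in both directions) by $\binom{n+1}{2} = \sum_{j=1}^{n} j$, with equality characterising $\bn$.

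The main obstacle is exactly this last bound: showing $\sigma_T(v,V\setminus\{v\}) + \sigma_T(V\setminus\{v\},v) \leqslant \binom{n+1}{2}$ for a well-chosen $v$. Here I would not pick an arbitrary removable vertex but rather the \emph{last} vertex $v_n$ of a Hamiltonian path; then $v_n$ reaches every other vertex, and $|v_n, v_i|$ is controlled because all arrows out of $v_n$ are ``backward'' along the path, while $|v_i, v_n|$ can be as large as $n-i$ along the path. One must argue that $\sum_i |v_i,v_n| + \sum_i |v_n,v_i|$ is maximised when there are no forward chords (forcing $|v_i,v_n|=n-i$) and the backward arrows from $v_n$ hit exactly the right vertices to make $\sum_i|v_n,v_i|$ as large as possible subject to strong connectivity — and that this joint maximum is realised precisely by the configuration occurring in $\bn$. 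Handling the interaction between the two sums (a chord that shortens $|v_i,v_n|$ might lengthen some $|v_n,v_j|$) is the delicate point; I expect one needs a careful exchange/shifting argument, or an auxiliary claim that in the extremal case the deleted vertex's in- and out-neighbourhoods are forced. Once the equality case is pinned down — no chords, backward arrows arranged as in Definition~\ref{def:backt} — uniqueness of $\bn$ at fixed order (already noted after that definition) closes the characterisation.
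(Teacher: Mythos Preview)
Your plan is essentially the same as the paper's: induction on $n$, peel off one vertex $v$, bound the two–way contribution $\sigma_T(v,\cdot)+\sigma_T(\cdot,v)$ by $\binom{n+1}{2}$, and apply the induction hypothesis to $T-v$ together with the observation that passing from $T-v$ to $T$ can only shorten distances among the remaining $n-1$ vertices. The paper isolates the key bound as a separate statement (Lemma~\ref{lem:tourn-key}, phrased in the ``add a vertex'' direction rather than ``remove'') and records the equality conditions --- no shortcuts through $v$, and existence of a vertex at distance $n-1$ from $v$ --- that pin down the extremal structure. So the skeleton is identical, and your diagnosis of where the actual work lies (the interaction between the two sums) is exactly right.

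Two small cautions on points you left open. First, your specific choice ``last vertex $v_n$ of a Hamiltonian path'' does not by itself guarantee that $T-v_n$ is strong; in a general strong tournament the terminal vertex of an arbitrary Hamiltonian path can be essential for strong connectivity. It is cleaner to invoke directly the standard fact that every strong tournament on $n\geqslant 4$ vertices has some vertex whose removal leaves a strong tournament, and then prove the contribution bound for \emph{any} such vertex --- this is how the paper's Lemma~\ref{lem:tourn-key} is set up. Second, for the equality case you need three things simultaneously: $T-v\simeq\bnn{n-1}$ (from the induction hypothesis), no shortcut through $v$, and the contribution of $v$ attaining $\binom{n+1}{2}$. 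The paper's equality clause in Lemma~\ref{lem:tourn-key} handles the last two; you should make explicit that combining all three forces $T\simeq\bn$, using the uniqueness of $\bn$ at fixed order.
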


This lemma states that among all tournaments, the backward tournament has a maximum price of symmetrisation (since all tournament symmetrisations have the same transmission). The next two lemmas help in the proof of this result.

\begin{lem}
For $n \geqslant 4$, we have
\begin{equation*}
\sigma(\bn) - \sigma(\bnn{n-1}) = \displaystyle \sum_{i=1}^{n} i.
\end{equation*}

\begin{proof}
Indeed, by construction of $\bn$ from $\bnn{n-1}$, we note that the distance matrix of $\bnn{n-1}$ is a sub-matrix of the distance matrix of $\bn$.

Moreover, there exists a unique arrow $(u,v)$ entering the added vertex $v$ (for some $u$). For each other vertex $t$, we have then $|t,v|_{\bn} = |t,u|_{\bnn{n-1}} + 1$ and so $\displaystyle \sum_{t \in \bn} |t,v| = \displaystyle \sum_{i=1}^{n-1} i$.

On the other hand, there are arrows directly linking $v$ to every other vertex $w$ of $T$ but $u$. We have then $|v,w| = 1$ if $w \neq u,v$ and $|v,u| = 2$. We conclude then that $\displaystyle \sum_{w \in \bn} |v,w| = n$.

By putting these three arguments together, we have the expected result.
\end{proof}
\end{lem}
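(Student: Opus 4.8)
The plan is to exploit the recursive construction of the backward tournament: by Definition~\ref{def:backt}, the digraph $\bnn{n-1}$ is exactly what remains when the vertex $v_n$ is deleted from $\bn$, because the arrows among $v_1,\dots,v_{n-1}$ are literally identical in the two graphs. I would therefore compute $\sigma(\bn)-\sigma(\bnn{n-1})$ by accounting separately for the distances among the old vertices, the distances from the old vertices \emph{into} $v_n$, and the distances from $v_n$ \emph{out} to the old vertices.

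First I would argue that the distances among $v_1,\dots,v_{n-1}$ are unchanged by the addition of $v_n$, \ie that $v_n$ creates no shortcut, so that the distance matrix of $\bnn{n-1}$ sits as a submatrix inside that of $\bn$. This is the crucial structural point. In $\bn$ the vertex $v_n$ has a \emph{unique} in-arrow, namely $(v_{n-1},v_n)$ coming from the set $B$. Hence any shortest path between two old vertices that passed through $v_n$ would enter it through $v_{n-1}$ and leave through one of its out-arrows $(v_n,v_j)$, producing a detour $v_{n-1}\to v_n\to v_j$ of length $2$; since $v_{n-1}$ already reaches every such $v_j$ in at most two steps within $\bnn{n-1}$, the detour is never shorter. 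With this established, $\sigma(\bn)-\sigma(\bnn{n-1})$ collapses to the two cross sums $\sum_{t\neq v_n}|t,v_n|$ and $\sum_{w\neq v_n}|v_n,w|$.

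Next I would evaluate the two cross sums directly from the arrow sets $B$ and $C$. For the distances into $v_n$, the backward tournament possesses a Hamiltonian shortest path $v_1\to v_2\to\cdots\to v_n$, so $|v_{n-i},v_n|=i$ for $1\leqslant i\leqslant n-1$; this is forced because the only arrow reaching $v_n$ issues from $v_{n-1}$, so each vertex must approach $v_n$ along this path. Summing gives $\sum_{t\neq v_n}|t,v_n|=\sum_{i=1}^{n-1}i$. For the distances out of $v_n$, the set $C$ supplies a direct arrow $(v_n,v_j)$ to every $v_j$ with $1\leqslant j\leqslant n-2$, contributing $n-2$ terms at distance $1$; the single exception is $v_{n-1}$, to which no direct arrow exists and which is reached in two steps (for instance through $v_{n-2}$). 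The outgoing row sum is therefore $(n-2)\cdot 1+2=n$.

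Adding the two cross sums yields $\sigma(\bn)-\sigma(\bnn{n-1})=\sum_{i=1}^{n-1}i+n=\sum_{i=1}^{n}i$, as claimed. The only genuinely delicate step is the no-shortcut argument of the second paragraph, together with the correct identification of the lone distance-$2$ entry ($v_n$ to $v_{n-1}$) in the outgoing row; once those are nailed down, everything else is a direct reading of Definition~\ref{def:backt}.
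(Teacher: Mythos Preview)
Your proof is correct and follows essentially the same approach as the paper's: both arguments observe that removing $v_n$ from $\bn$ yields $\bnn{n-1}$ with all pairwise distances among $v_1,\dots,v_{n-1}$ preserved, then compute the incoming row sum $\sum_{i=1}^{n-1}i$ via the unique in-arrow $(v_{n-1},v_n)$ and the outgoing row sum $(n-2)+2=n$ via the direct arrows to all but $v_{n-1}$. If anything, you supply a bit more justification for the no-shortcut step than the paper does.
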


The following lemma is the key of the proof of Lemma \ref{lem:back-max}.

\begin{lem}
Let $T=(V,A) \in \tn$ with $n \geqslant 3$ strongly connected and $T+v \in \tnn{n+1}$ strongly connected a tournament built by adding a vertex $v$ to $T$. We have
\begin{equation*}
\sigma(T+v) - \sigma(T) \leqslant \binom{n+1}{2},
\end{equation*}
with equality if and only if
\begin{enumerate}
\item adding $v$ does not shorten any shortest path in $T$.
\item $\exists v' \in V$ such as $|v',v| = n-1$.
\end{enumerate}
\label{lem:tourn-key}
\end{lem}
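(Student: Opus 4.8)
The plan is to analyze the change in transmission when a single vertex $v$ is added to a strongly connected tournament $T$ of order $n$, keeping $T+v$ strongly connected. First I would split the increment $\sigma(T+v)-\sigma(T)$ into three contributions: (i) the change in distances $|x,y|$ for pairs $x,y\in V(T)$, (ii) the sum $\sigma_{T+v}(V(T),v)=\sum_{x\in V(T)}|x,v|$ of distances \emph{into} $v$, and (iii) the sum $\sigma_{T+v}(v,V(T))=\sum_{x\in V(T)}|v,x|$ of distances \emph{out of} $v$. For (i), adding a vertex can never lengthen an existing shortest path, so this contribution is $\leqslant 0$, and it is $=0$ exactly when no shortest path of $T$ is shortened — this is condition~1. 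So it suffices to bound (ii)$+$(iii) by $\binom{n+1}{2}=\binom{n}{2}+n$, and to identify the equality case.

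Next I would bound (iii). Since $T+v$ is a tournament, $v$ has an out-neighbour among the $n$ vertices of $T$ unless its out-degree in $T+v$ is $0$; strong connectivity forbids out-degree $0$, so $v$ dominates at least one vertex. More strongly, from $v$ every vertex of $T$ is reached: those vertices $w$ with $(v,w)\in A(T+v)$ are at distance $1$, and any vertex $u$ with $(u,v)\in A(T+v)$ is reached via some out-neighbour of $v$, hence at distance $2$ (it cannot be closer than $2$ and a path of length $2$ always exists because, picking any out-neighbour $w$ of $v$, either $(w,u)$ is an arrow or $(u,w)$ is, and in the latter case one uses that... — here I would argue that there is always \emph{some} out-neighbour $w$ of $v$ with $(w,u)$ an arrow, using that $u$ cannot dominate all out-neighbours of $v$ without contradicting strong connectivity, or else simply bound $|v,u|\leqslant 2$ by a short separate argument). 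Writing $d^+$ for the out-degree of $v$ in $T+v$, this gives $\sigma_{T+v}(v,V(T))\leqslant d^+\cdot 1+(n-d^+)\cdot 2=2n-d^+$.

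Then I would bound (ii). Here the key observation is that in any tournament the distances from the various vertices into a fixed vertex $v$ are "spread out": if we order vertices by their distance to $v$, the vertices at distance $1$ are exactly the $n-d^+$ in-neighbours of $v$, and more generally a BFS-in-tree argument shows that the multiset of distances $\{|x,v| : x\in V(T)\}$ is dominated, term by term after sorting, by $\{1,2,3,\dots\}$ in a way that makes $\sum_x |x,v|$ largest precisely when the in-distances are $1,2,\dots,n$ — i.e. when there is a Hamiltonian path into $v$, which is condition~2 ($\exists v'$ with $|v',v|=n-1$). This yields $\sigma_{T+v}(V(T),v)\leqslant 1+2+\cdots+n=\binom{n+1}{2}$, but that alone is too weak; the point is to track the out-degree. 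If $v$ has $d^+$ out-neighbours then it has $n-d^+$ in-neighbours, all at distance $1$, so the remaining $d^+$ vertices contribute at most $2+3+\cdots+(d^++1)$, giving $\sigma_{T+v}(V(T),v)\leqslant (n-d^+)+\sum_{j=2}^{d^++1} j = n-d^+ + \binom{d^++2}{2}-1$. Adding this to the bound on (iii) gives $(\text{ii})+(\text{iii})\leqslant 2n - d^+ + \binom{d^++2}{2}-1$, and I would check this is maximized over $d^+\in\{1,\dots,n\}$... — actually the clean route is the direct one: combine "all in-neighbours at distance $1$" with "a Hamiltonian in-path gives the extreme distribution" to get $\sigma_{T+v}(V(T),v)\leqslant \binom{n+1}{2}$ and note this forces $v'$ at distance $n-1$, then separately observe $\sigma_{T+v}(v,V(T))\leqslant n$ is \emph{not} what we want — rather we want the \emph{sum} of the two to be $\leqslant \binom{n}{2}+n$.

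The honest bookkeeping is: $\sigma_{T+v}(V(T),v)+\sigma_{T+v}(v,V(T)) = \sum_{x\in V(T)}\big(|x,v|+|v,x|\big)$, and for each $x$ the pair $\{v,x\}$ contributes $|x,v|+|v,x|$; summing the "distance $1$ on one side, distance $\leqslant$ something on the other" bounds, together with the Hamiltonian-in-path extremal distribution, gives exactly $\binom{n+1}{2}$, with equality iff condition~2 holds. Combined with (i)$\leqslant 0$ and equality iff condition~1, this proves the lemma. The main obstacle I anticipate is making the "in-distances are dominated by $1,2,\dots,n$ with the sum maximized by a Hamiltonian in-path" step fully rigorous while simultaneously using that the distance-$1$ layer has size exactly $n-d^+$; getting the two bounds to interlock so that the sum is \emph{exactly} $\binom{n+1}{2}$ in the equality case — rather than merely $\leqslant$ — is the delicate part, and I would handle it by a careful layer-by-layer BFS argument on the in-tree of $v$, charging each vertex a distance at least as large as its BFS layer index and checking that equality throughout forces both a Hamiltonian in-path (condition~2) and, via the interaction with distances out of $v$, that no shortcut is created (condition~1).
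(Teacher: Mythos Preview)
Your overall decomposition into (i) the change in distances within $T$, (ii) $\sum_x |x,v|$, and (iii) $\sum_x |v,x|$ matches the paper's starting point, as does your treatment of (i).

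The detour through the out-degree $d^+$ is where the plan runs into trouble. The claim that every in-neighbour $u$ of $v$ satisfies $|v,u|\leqslant 2$ is false. A counterexample on four vertices: take $T+v$ with arrows $v\to w$, $w\to u_1$, $u_1\to u_2$, $u_2\to w$, $u_1\to v$, $u_2\to v$. This is a strongly connected tournament, $u_2$ is an in-neighbour of $v$, yet $|v,u_2|=3$ (the only out-arrow from $v$ is to $w$, and from $w$ one must pass through $u_1$). So the bound $(\mathrm{iii})\leqslant 2n-d^+$ does not hold --- in this example $n=3$, $d^+=1$, and $(\mathrm{iii})=1+2+3=6>5$. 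The subsequent bookkeeping, which you already sensed was not balancing, cannot be repaired along these lines. If you replace the false distance-$2$ claim by the symmetric BFS-layer bound (the same argument you use correctly for (ii), now applied to out-distances from $v$), the two separate bounds \emph{do} combine, but you then have to optimise over $d^+$ and check that both layer bounds are simultaneously tight at the extremal $d^+$ --- considerably more work than needed.

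The paper's route is exactly the one you reach in your final ``honest bookkeeping'' paragraph: do not separate (ii) and (iii) at all. Since $T+v$ is a tournament, for each $i\in V(T)$ exactly one of $|i,v|,|v,i|$ equals~$1$, so
\[
D \;=\; \sum_{i\in V(T)}\bigl(|i,v|+|v,i|\bigr) \;=\; n + D',
\]
where $D'$ collects the $n$ ``non-arrow'' distances. The paper then argues directly that $D'$ is maximised when there is a vertex $v'$ with $|v',v|$ as large as possible (condition~2), in which case the non-arrow distances are forced to line up along that path. This sidesteps any tracking of $d^+$ and avoids the false distance-$2$ step entirely. In short: your eventual destination is the paper's argument, but the $d^+$ route on which you spend most of the plan does not get there.
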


\begin{proof}
By hypothesis, we built $T+v$ by adding some arrows such as $T+v$ is a tournament. Let us assume that adding these arrows do not shorten any shortest path in $T$, \ie that $\sigma_{T+v}(V) = \sigma_T(V)$. We have then $\sigma_{T+v}(V \cup \{v\}) = \sigma_T(V) + \underbrace{\displaystyle \sum_{i \in T} |i,v| + |v,i|}_{:=D}$.

Let us look at the quantity $D$. We have $D = n + D'$ for some $D'$, since there are exactly $n$ arrows of type $(v,i)$ or $(i,v)$ in $T+v$. Moreover, $D'$ is maximum when there exists a vertex $v'$ such as $|v',v| = n-1$. In this case, all other vertices of $T+v$ are on the shortest path from $v'$ to $v$. We have then
\begin{equation*}
\begin{array}{lll}
\displaystyle \sum_{i \in T} |i,v| + |v,i| & \leqslant & n + \displaystyle \sum_{i=1}^{n-1} i \\[1.5em]
                                           & = & \displaystyle \binom{n+1}{2}.
\end{array}
\end{equation*}

We note that if there is no vertex $v'$ such as $|v',v| = n-1$, the maximum value of $D'$ cannot be reached (and so, \emph{a fortiori} neither $\binom{n+1}{2}$). On the other hand, if adding $v$ shortened some shortest paths of $T$, we have $\sigma_{T+v}(V) < \sigma_T(V)$ and so the maximum value of $\binom{n+1}{2}$ cannot be reached either. This proves the lemma.
\end{proof}

We are now able to prove Lemma \ref{lem:back-max}.

\begin{proof}[Proof of Lemma \ref{lem:back-max}]
Let $T \in \tn$ with $n \geqslant 3$ strongly connected, let us show that $\sigma(T) \leqslant \displaystyle \sum_{i=2}^n \binom{i+1}{2}$, with equality if and only if $T \simeq \bn$. This way, Lemma \ref{lem:back-max} will be proved.

We proceed by induction on the order $n$ of considered tournaments. The basis holds since $\cnn{3} = \bnn{3}$ is the only strongly connected tournament of order $3$.

Suppose now that $n \geqslant 4$, we have
\begin{equation*}
\begin{array}{lcll}
\sigma(T+v) & \leqslant & \sigma(T) + \displaystyle \binom{n+1}{2}  & \textrm{by Lemma \ref{lem:tourn-key}}\\
            & \leqslant & \displaystyle \sum_{i=2}^{n-1} \binom{i+1}{2} + \binom{n+1}{2} & \textrm{by induction}\\
            &     =     & \displaystyle \sum_{i=2}^{n} \binom{i+1}{2}.     &
\end{array}
\end{equation*}

We note that equality holds in the second inequality if and only if $T \simeq \bn$, by induction. On the other hand, equality holds in the first inequality under Lemma \ref{lem:tourn-key} hypotheses. Putting these two arguments together, we build $T+v$ as $\bnn{n+1}$ from $\bn$. This proves the lemma.
\end{proof}

With this result, we can now finally prove Theorem \ref{thm:bags}. 

\begin{proof}[Proof of Theorem \ref{thm:bags}, Inequation \eqref{eqn:hnka}]
Let $n, k \in \IN$ such as $3 \leqslant k \leqslant n$, we note that for all $T \in \tnn{k}$, we have $\nor{T} \simeq \nor{\knn{k}}$. To prove the lemma, it is then enough to show that for all $T \in \tnn{k}$ and for all $a \in A(T)$, $\sigma(\hnkk{n}{k}) \geqslant \sigma(\hntka)$, with equality if and only if $\hnkk{n}{k} \simeq \hntka$.

On the general bag $\hntka$ illustrated on Figure \ref{sub:lem-bag}, we then have three possible situations :
\begin{itemize}
\item $T$ is strongly connected and adding $P$ does not create any shortcut in $T$,
\item $T$ is strongly connected and adding $P$ creates some shortcuts in $T$,
\item $T$ is not strongly connected.
\end{itemize}
The graph $\hnkk{n}{k}$ is illustrated and labelled the same way in Figure \ref{sub:back-bag}.

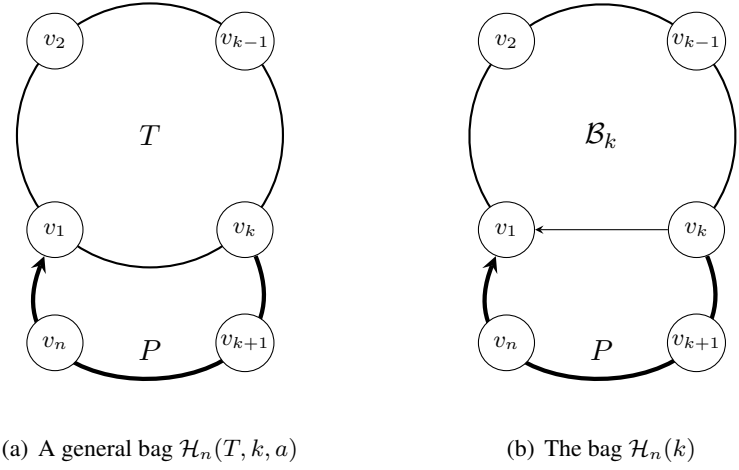
\begin{figure}[!htpd]
\begin{center}
\subfigure[A general bag $\hntka$]{\label{sub:lem-bag}
\begin{tikzpicture}
\draw[thick] (1.25,1.25) circle (1.75);
\draw (1.25, 1.25) node {$T$};
\node[draw, _bigvertex] (v1) at (0, 0) {\footnotesize $v_1$};
\node[draw, _bigvertex] (v2) at (0, 2.5) {\footnotesize $v_2$};
\node[draw, _bigvertex] (v3) at (2.5, 2.5) {\footnotesize $v_{k-1}$};
\node[draw, _bigvertex] (v4) at (2.5, 0) {\footnotesize $v_k$};
\draw[->, >=stealth, ultra thick] (v4) .. controls +(1,-2.5) and +(-1,-2.5).. (v1);
\node[draw, _bigvertex] (v5) at (2.5, -1.5) {\footnotesize $v_{k+1}$};
\node[draw, _bigvertex] (v6) at (0, -1.5) {\footnotesize $v_n$};
\draw (1.25, -1.6) node {$P$};
\end{tikzpicture}
}
\hspace{1cm}
\subfigure[The bag $\hnkk{n}{k}$]{\label{sub:back-bag}
\begin{tikzpicture}
\draw[thick] (2.5, 0) arc (-45:225:1.75) ;
\draw (1.25, 1.25) node {$\bnn{k}$};
\node[draw, _bigvertex] (v1) at (0, 0) {\footnotesize $v_1$};
\node[draw, _bigvertex] (v2) at (0, 2.5) {\footnotesize $v_2$};
\node[draw, _bigvertex] (v3) at (2.5, 2.5) {\footnotesize $v_{k-1}$};
\node[draw, _bigvertex] (v4) at (2.5, 0) {\footnotesize $v_k$};
\draw[->, >=stealth, ultra thick] (v4) .. controls +(1,-2.5) and +(-1,-2.5).. (v1);
\node[draw, _bigvertex] (v5) at (2.5, -1.5) {\footnotesize $v_{k+1}$};
\node[draw, _bigvertex] (v6) at (0, -1.5) {\footnotesize $v_n$};
\draw (1.25, -1.6) node {$P$};
\draw[_arc] (v4) -- (v1);
\end{tikzpicture}
}
\end{center}
\caption{Illustration of the bag $\hntka$ we need to prove non optimal.}
\end{figure}

However, we can handle the three cases at the same time. To simplify notation, we will note $G = \hntka$ and $H = \hnkk{n}{k}$. In every case, we can decompose the transmission computation in the following way :
\begin{equation*}
\sigma(G) = \sigma_G(T) + \sigma_G(\underline{P})+\sigma_G(T,\underline{P}) + \sigma_G(\underline{P},T),
\end{equation*}
where $\underline{P} = P - v_1 - v_k$. We cannot simplify any of the terms more (for instance by substituting $\sigma_G(T)$ by $\sigma(T)$ since we have no information regarding shortest paths in $T$ or $P$ after adding $P$ to $T$. However, when labelling as $v_1, \dots, v_k$ the vertices of $T$, we know the following statement is true :
\begin{equation*}
\sigma(T) = \displaystyle \sum_{\substack{v_i,v_j \in T \\ i < j}} |v_i,v_j|_G + |v_j,v_i|_G.
\end{equation*}

Again, we know pretty nothing about $|v_i,v_j|_G + |v_j,v_i|_G$. However, without loss of generality, since $T$ is a tournament, we can assume $(v_i,v_j) \in T$.  Let us denote $C$ the cycle defined by $(v_i,v_j)$ and a shortest path from $v_j$~to~$v_i$.

But then, on this cycle, we have that $|v_i,v_j| + |v_j,v_i| = 1 + |v_j,v_i| \leqslant n$. Indeed, in the worst case scenario, the shortest path from $v_j$ to $v_i$ crosses all vertices of the graph, and then, crosses $P$. 

Moreover, equality holds in that case and if and only if $|v_1,v_k|=k-1=|v_1,v_i|+1+|v_j,v_k|$, that is, if there is a single shortest path from $v_1$ to $v_k$ of length $k-1$ and $(v_i,v_j)$ lie on this shortest path. On the other hand, if $|v_1,v_k| < k-1$, then $T\not\simeq\bn$ and this equality does not hold. Along with Lemma \ref{lem:back-max}, this allows us to conclude that $\sigma_G(T) \leqslant \sigma_{H}(\bnn{k})$ with equality if and only if $T \simeq \bnn{k}$.

We can use similar arguments to deal with $\sigma(\underline{P})$ and $\sigma_G(T,\underline{P}) + \sigma_G(\underline{P},T)$, allowing us to finally conclude that $\sigma(G) \leqslant \sigma(H)$ with equality if and only if $T \simeq \bnn{k}$.
\end{proof}

We now know that among all bags of order $n$ made from a tournament of order $k$, $\posm{\sigma}{\hnkk{n}{k}}$ is maximum. All we need to do now is to find, for all $n$, what value of $k$ maximises $\posm{\sigma}{\hnkk{n}{k}}$.

To prove the other part of Theorem \ref{thm:bags}, we need the two following lemmas :
\begin{lem}[Trivial]\label{lem:cn}
Let $n \in \IN$, we have
\begin{equation*}
\sigma(\cn) = \frac{n^2(n-1)}{2},
\hspace{2cm}
\sigma(\nor{\cn}) = \left\{
\begin{array}{ll}
\frac{n^3}{4} & \textrm{ if $n$ is even}, \\ [0.5em]
\frac{n(n+1)(n-1)}{4} & \textrm{ otherwise.}
\end{array}
\right.
\end{equation*}
\end{lem}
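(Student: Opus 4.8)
The statement is purely computational and rests on the vertex-transitivity of $\cn$ and of $\nor{\cn}$. The plan is, in each graph, to compute the out-transmission from one fixed vertex and then multiply by $n$.

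\textbf{The directed cycle.} I would label the vertices $v_0,\dots,v_{n-1}$ with arrows $(v_i,v_{i+1 \bmod n})$. The unique shortest path from $v_i$ to $v_j$ runs forward around the cycle, so $|v_i,v_j|=(j-i)\bmod n$. Since cyclic rotation is an automorphism, the sum $\sum_{j\neq i}|v_i,v_j|$ is independent of $i$ and equals $\sum_{d=1}^{n-1}d=\binom{n}{2}$; summing over the $n$ sources gives $\sigma(\cn)=n\binom{n}{2}=\frac{n^2(n-1)}{2}$.

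\textbf{The symmetric version.} Here $\nor{\cn}$ is the undirected cycle, in which $|v_i,v_j|=\min\{|i-j|,\,n-|i-j|\}$. Again by vertex-transitivity it suffices to evaluate the distance sum $S_n$ from $v_0$ and set $\sigma(\nor{\cn})=nS_n$, and the point is simply to read the multiset of distances $\{\,|v_0,v_k| : 1\le k\le n-1\,\}$ off this formula. For $n=2m$ that multiset is $\{1,2,\dots,m-1,m,m-1,\dots,2,1\}$, so $S_n=\binom{m+1}{2}+\binom{m}{2}=m^2$ and hence $\sigma(\nor{\cn})=2m\cdot m^2=\frac{n^3}{4}$. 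For $n=2m+1$ it is $\{1,\dots,m-1,m,m,m-1,\dots,2,1\}$, so $S_n=2\binom{m+1}{2}=m(m+1)$ and hence $\sigma(\nor{\cn})=(2m+1)\,m(m+1)=\frac{n(n-1)(n+1)}{4}$.

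\textbf{Main obstacle.} There is essentially none: the lemma is flagged as trivial, and the only thing requiring any care is the parity bookkeeping — the value $m$ occurs exactly once in the distance multiset when $n$ is even but exactly twice when $n$ is odd — after which both claimed formulas drop out of the identity $\sum_{d=1}^{t}d=\binom{t+1}{2}$.
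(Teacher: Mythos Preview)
Your proof is correct; the paper provides no proof at all for this lemma (it is labelled ``Trivial'' and simply stated), so your vertex-transitivity argument with the parity case split is exactly the kind of routine computation the authors had in mind.
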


The next lemma precisely describes computations of $\sigma(\hnkk{n}{k})$ and $\sigma(\nor{\hnkk{n}{k}})$, useful to finally deal with Equation \eqref{eqn:hnk} of Theorem \ref{thm:bags}.

\begin{lem}\label{lem:hnk}
Let $n,k \in \IN$ with $3 \leqslant k < n$, we have
\begin{equation*}
\sigma(\hnkk{n}{k}) = \frac{1}{2} n^3 - \frac{1}{2} n^2 + \frac{1}{2} k (1-k) n + \frac{1}{6} (k-1) (k^2 + 4k + 6),
\end{equation*}
\begin{equation*}
\sigma(\nor{\hnkk{n}{k}}) = \left\{
\begin{array}{ll}
\frac{1}{4}n^3 - \frac{1}{4}(k-2)n^2 - \frac{1}{4} (k-2)(k-6) n & \multirow{2}{*}{\textrm{if $n-k$ is even,}}\\[0.2em]
	+ \frac{1}{4} k (k-2) (k-4) & \\[1em]
\frac{1}{4} n^3 - \frac{1}{4} (k-2) n^2 - \frac{1}{4} (k^2 - 8k + 13) n & \multirow{2}{*}{\textrm{otherwise.}}\\[0.2em]
+ \frac{1}{4} (k-1)(k-2)(k-3) & 
\end{array}
\right.
\end{equation*}
\end{lem}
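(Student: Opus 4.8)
The plan is to compute both transmissions directly by decomposing the vertex set of $\hnkk{n}{k}$ into the tournament part $\bnn{k}$ and the open path $\underline{P} = \{v_{k+1}, \dots, v_n\}$, using the $\sigma$-notation introduced above to organise the bookkeeping. For $\sigma(\hnkk{n}{k})$, I would first recall that $\bnn{k}$ has a Hamiltonian shortest path $v_1 \rightsquigarrow v_k$ of length $k-1$, and that adding the path $P$ of length $n-k+1$ between $v_k$ and $v_1$ creates no shortcut inside $\bnn{k}$ (this is exactly the equality case already isolated in the proof of Theorem \ref{thm:bags}). Hence $\sigma_{\hnkk{n}{k}}(\bnn{k}) = \sigma(\bnn{k})$, which by Lemma \ref{lem:back-max} equals $\sum_{i=2}^{k}\binom{i+1}{2}$; a routine closed-form evaluation of this sum gives the $\frac16(k-1)(k^2+4k+6)$-type term. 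Next, distances within $\underline{P}$ are just those of a directed path, contributing $\sigma(\path_{n-k})$; and the cross terms $\sigma(\bnn{k},\underline{P})$ and $\sigma(\underline{P},\bnn{k})$ are computed by noting that every vertex of $\underline{P}$ reaches $\bnn{k}$ only by continuing along $P$ to $v_1$ and then using $\bnn{k}$'s distances from $v_1$, while $\bnn{k}$ reaches $\underline{P}$ only through $v_k$ followed by $P$. Summing the four pieces and collecting powers of $n$ yields the stated cubic.

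For $\sigma(\nor{\hnkk{n}{k}})$, the same decomposition works but now $\nor{\bnn{k}} \simeq \nor{\knn{k}}$ is complete, so all intra-tournament distances are $1$, giving $\sigma_{\nor{\hnkk{n}{k}}}(\nor{\bnn{k}}) = k(k-1)$. The path $P$ becomes an undirected path of length $n-k+1$ joining the two tournament vertices $v_1$ and $v_k$ (which are themselves at distance $1$), so the relevant subgraph is an undirected cycle of length $n-k+2$ with $n-k+1$ of its vertices ``decorated'' by the rest of $\nor{\knn{k}}$. The intra-$\underline{P}$ distances and the cross distances between $\underline{P}$ and the tournament then depend on shortest paths around this cycle, and this is where the parity of $n-k$ enters: for a vertex of $\underline{P}$ at cycle-distance $d$ from $v_1$ (equivalently $n-k+2-d$ from $v_k$), the distance to the bulk of $\nor{\knn{k}}$ is $1 + \min(d, n-k+2-d)$, and summing $\min(d, \ell-d)$ over a cycle of length $\ell$ splits according to whether $\ell$ is even or odd. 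Carefully carrying out these two sums — distances among the $n-k+1$ path-interior vertices, and from those vertices to the $k-2$ ``private'' tournament vertices plus the two junction vertices $v_1, v_k$ — and adding $k(k-1)$ produces the two cases in the statement.

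I expect the main obstacle to be purely combinatorial care rather than conceptual: getting the off-by-one indices right in the $\min$-sums over the cycle (the junction vertices $v_1$ and $v_k$ must be counted once, not double-counted via both the tournament side and the path side), and correctly handling the distinction between the \emph{interior} path vertices $\underline{P}$ and the full path $P$. A good sanity check, which I would include, is to verify both formulas against small explicit cases (e.g. $k=3$, small $n$, where $\hnkk{n}{3}$ is close to a cycle) and against the degenerate consistency with Lemma \ref{lem:cn} in spirit; another is to confirm that the leading term $\frac14 n^3$ of $\sigma(\nor{\hnkk{n}{k}})$ matches the even-cycle value $\frac{n^3}{4}$ from Lemma \ref{lem:cn}, since $\nor{\hnkk{n}{k}}$ is a large cycle with a bounded-size dense patch. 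Once both formulas are in hand, the remaining optimisation over $k$ needed to finish Theorem \ref{thm:bags} — locating $k^*$ via the real root $r$ — reduces to analysing a difference of cubics in $k$, which I would treat afterwards as a separate (elementary) calculus step.
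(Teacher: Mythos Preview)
Your plan is sound and would lead to the same formulas, but both halves differ from the paper's bookkeeping in ways worth noting.

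For $\sigma(\hnkk{n}{k})$, you and the paper use the same four-part decomposition $\sigma(\bnn{k}) + \sigma(\underline{P}) + \sigma(\underline{P},\bnn{k}) + \sigma(\bnn{k},\underline{P})$, but the paper does not compute the three path-related pieces separately. Instead it observes once and for all that for \emph{any} ordered pair $\{i,j\}$ with at least one vertex in $\underline{P}$, one has $|i,j|+|j,i|=n$ (the two shortest paths together traverse the full Hamiltonian cycle of $\hnkk{n}{k}$). This collapses the last three terms to $\frac12(n-k)(n-k-1)\,n + k(n-k)\,n$ in one line. Your phrase ``distances within $\underline{P}$ are just those of a directed path, contributing $\sigma(\pnn{n-k})$'' is not quite right as stated: the \emph{backward} distances inside $\underline{P}$ are not path distances at all --- they go all the way around through $\bnn{k}$ --- so $\sigma_G(\underline{P})$ is not $\sigma(\pnn{n-k})$. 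You would catch this when executing, but the paper's $|i,j|+|j,i|=n$ trick sidesteps the issue entirely.

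For $\sigma(\nor{\hnkk{n}{k}})$, you split the vertex set as $\nor{\knn{k}}$ versus $\underline{P}$; the paper instead splits it as $\nor{\cnn{n-k+2}}$ (the cycle through $v_1, v_k$ and all of $\underline{P}$) versus $\nor{\knn{k-2}}$ (the remaining tournament vertices). The advantage of the paper's cut is that both blocks have known transmissions --- Lemma~\ref{lem:cn} gives $\sigma(\nor{\cnn{n-k+2}})$ directly, and $\sigma(\nor{\knn{k-2}}) = (k-2)(k-3)$ --- and every vertex of $\nor{\knn{k-2}}$ is adjacent to both endpoints $v_1, v_k$ of the cycle, so the cross term is simply $2(k-2)$ times a single $\sum \min(d,\ell-d)$-type sum over the cycle. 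Your decomposition forces you to compute $\sigma_{\nor{G}}(\underline{P})$ as a \emph{partial} sum over a cycle (excluding $v_1$ and $v_k$), which is doable but messier, and is exactly the off-by-one hazard you yourself flag. Both routes work; the paper's is shorter.
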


\begin{proof}
A first easy step in order to prove this lemma is first to compute $\sigma$ for the tournament \bn. We have
\begin{equation*}
\begin{array}{lcll}
\sigma(\bn) & = & \displaystyle \sum_{i=2}^{n} \binom{i+1}{2} & \textrm{by Lemma \ref{lem:back-max}} \\[1.5em]
            & = & \displaystyle \frac{1}{2} \sum_{i=1}^n \big( i^2 \big) + \frac{1}{2} \sum_{i=1}^n \big( i \big) - 1 & \\[1.5em]
	    & = & \frac{1}{6} (n-1) (n^2 + 4n + 6).
\end{array}
\end{equation*}
We are now able to compute $\sigma(\hnkk{n}{k})$. When using the same notation $\underline{P}$ as in the previous proof for a path $P$, we have 
\begin{equation*}
\begin{array}{lcl}
\sigma(\hnkk{n}{k}) &=& \sigma(\bnn{k}) + \sigma(\underline{P}) + \sigma(\underline{P}, \bnn{k}) + \sigma(\bnn{k}, \underline{P})\\[0.5em]
		  &=& \sigma(\bnn{k}) + \displaystyle \sum_{i \in \underline{P}}~\sum_{\substack{j \in \underline{P} \\ i < j}} \big( |i,j| + |j,i| \big) + \sum_{i \in \underline{P}}~\sum_{j \in \bnn{k}} \big( |i,j| + |j,i|\big)\\[2.5em]
		  &=& \sigma(\bnn{k}) + \displaystyle \sum_{i \in \underline{P}}~\sum_{\substack{j \in \underline{P} \\ i < j}} n + \sum_{i \in \underline{P}}~\sum_{j \in \bnn{k}} n\\[2em]
		  &=& \frac{1}{6} (k-1)(k^2 + 4k + 6) + \frac{1}{2} (n-k) (n-k-1) n + k (n-k) n\\[0.5em]
		  &=& \frac{1}{2} n^3 - \frac{1}{2} n^2 + \frac{1}{2} k (1-k) n + \frac{1}{6} (k-1) (k^2 + 4k + 6).
\end{array}
\end{equation*}

The computation of $\sigma(\nor{\hnkk{n}{k}})$ is a little longer and requires to distinguish whether $n-k$ is even or odd. However, in both cases, we can see that
\begin{equation*}
\sigma(\nor{\hnkk{n}{k}}) = \sigma(\nor{\cnn{n-k+2}}) + \sigma(\nor{\knn{k-2}}) + 2 \sigma(\nor{\cnn{n-k+2}}, \nor{\knn{k-2}}).
\end{equation*}
This situation is illustrated in Figure \ref{fig:deltahnk}. In this figure, we can see that no shorter path between two vertices of in $\nor{\cnn{n-k+2}}$ crosses a vertex not in $\nor{\cnn{n-k+2}}$. The same property holds for $\nor{\knn{k-2}}$. Moreover, path going from $\knn{k-2}$ to $\cnn{n-k+2}$ can be split in two parts, illustrated in bold.

\begin{figure}[!htpd]
\begin{center}
\begin{tikzpicture}
\draw (0, 0) node[_blackv] {} -- (0.5,0.5) node[_blackv] {} -- (1.1,0.7) node[_blackv] {} -- (1.8,0.8) node[_blackv] {}  -- (2.4,0.4) node[_blackv] {} -- (3.6, 1);
\draw (0.5,-0.5) node[_blackv] {} -- (1.1,-0.7) node[_blackv] {} -- (1.8,-0.8) node[_blackv] {}  -- (2.4,-0.4) node[_blackv] {} -- (3.6, 1);
\draw (3, 1) node[_blackv] {};
\draw (3.6, 1) node[_blackv] {};
\draw (4.2, 0.4) node[_blackv] {};
\draw (3, -1) node[_blackv] {};
\draw (3.6, -1) node[_blackv] {};
\draw (4.2, -0.4) node[_blackv] {};
\draw (1.25, 0) ellipse (1.5 and 1.7); 
\draw (3.3, 0) ellipse (1.3 and 1.7); 
\draw (2.7, 1.3) -- (4.5, 1.3) -- (4.5, -2.4) -- (2.7, -2.4) -- cycle;
\draw (4, -2.1) node{$\knn{k-2}$};
\draw (2.8, 1.9) node{$\knn{k}$};
\draw (0.1, 1.9) node{$\cnn{n-k+2}$};
\end{tikzpicture}
\end{center}
\caption{Computation of $\sigma$ in $\nor{\hnkk{n}{k}}$.}
\label{fig:deltahnk}
\end{figure}

\textsc{\underline{Case 1} : $n-k$ is even.}

Using Lemma \ref{lem:cn}, we have 
\begin{equation*}
\begin{array}{lcl}
\sigma(\nor{\hnkk{n}{k}}) &=& \sigma(\nor{\cnn{n-k+2}}) + \sigma(\nor{\knn{k-2}}) + 2 \sigma(\nor{\cnn{n-k+2}}, \nor{\knn{k-2}})\\
			&=& \frac{1}{4} (n-k+2)^3 + (k-2) (k-3) + 2 (k-2) 2 \displaystyle \sum_{i=1}^{\frac{n-k+2}{2}} i \\
			&=& \frac{1}{4} (n-k+2)^3 + (k-2) (k-3) \\[0.5em]
			&& + \frac{1}{2} (k-2)(n-k+2)(n-k+4)\\[0.5em]
			&=& \frac{1}{4}n^3 - \frac{1}{4}(k-2)n^2 - \frac{1}{4} (k-2)(k-6) n + \frac{1}{4} k (k-2) (k-4).
\end{array}
\end{equation*}

\textsc{\underline{Case 2} : $n-k$ is odd.}
 
Again, using Lemma \ref{lem:cn}, we have 
\begin{equation*}
\begin{array}{lcl}
\sigma(\nor{\hnkk{n}{k}}) &=& \sigma(\nor{\cnn{n-k+2}}) + \sigma(\nor{\knn{k-2}}) + 2 \sigma(\nor{\cnn{n-k+2}}, \nor{\knn{k-2}})\\[0.5em]
			&=& \frac{1}{4} (n-k+1)(n-k+2)(n-k+3) + (k-2)(k-3) \\
			&& + 2 (k-2) \left( \displaystyle 2 \sum_{i=1}^{\frac{n-k+1}{2}}\big(i\big) + \frac{n-k+3}{2}\right)\\[1.5em]
			&=& \frac{1}{4} (n-k+1)(n-k+2)(n-k+3) + (k-2)(k-3) \\[0.5em]
			&& + \frac{1}{2} (k-2)(n-k+1)(n-k+3) + (k-2)(n-k+3)\\[0.5em]
			&=& \frac{1}{4} n^3 - \frac{1}{4} (k-2) n^2 - \frac{1}{4} (k^2 - 8k + 13) n \\[0.5em]
			&&+ \frac{1}{4} (k-1)(k-2)(k-3).\\
\end{array}
\end{equation*}
\end{proof}

We are now finally able to prove Theorem \ref{thm:bags}.

\begin{proof}[Proof of Theorem \ref{thm:bags}, Inequation \eqref{eqn:hnk}]
Basically, for this result, we are simply wondering what value of $k$ maximises $\sigma(\hnkk{n}{k})$ for all possible value of $n$. 

We note $k^*$ the value of $k$ such that $\posm{\sigma}{\hnkk{n}{k}}$ is maximum and we analyse this function to determine it. Suppose first that $n-k$ is even. From Lemma \ref{lem:hnk} we have
\begin{equation} \label{eq:pos_even}
\posm{\sigma}{\hnkk{n}{k}} = \frac{-k^3}{12}  + \left( \frac{8-n}{4}\right) k^2 + \left( \frac{3n^2-18n-20}{12} \right) k + \frac{n^3-4n^2+12n-4}{4} \cdot
\end{equation}

The derivative of $\posm{\sigma}{\hnkk{n}{k}}$ with respect to $k$ is 
\begin{equation} \label{eq:dpos_even}
\partial_k \posm{\sigma}{\hnkk{n}{k}} = \frac{-k^2}{4}  + \left( \frac{8-n}{2}\right) k + \frac{3n^2-18n-20}{12},
\end{equation}
and the roots of the derivative are
\begin{equation*}
8-n \pm \sqrt{2 n^2-22n+\frac{344}{6}} \cdot
\end{equation*} 

Because $n \geqslant 11$ and as we are looking for a value of $k^*$ in the range $[2,n-1]$, there is only one root that should be considered (the other being negative). We note this positive root $r_{even}$. Moreover, $r_{even}$ is in the range $[2,n-1]$ since
\begin{equation*}
\begin{array}{lcl}
r_{even} &=& 8-n + \sqrt{2 n^2-22n+\frac{344}{6}} \\
	 &<& 8-n + \sqrt{2 n^2-22n+\frac{363}{6}} \\
	 &=& 8-n + \sqrt{2\left(n - \frac{11}{2}\right)^2}\\
	 &=& n (\sqrt{2} - 1) + 8 - \frac{11\sqrt{2}}{2}\\
	 &\simeq& 0.4142 n + 0.2218.
\end{array}
\end{equation*}
Furthermore, we will note $r = n (\sqrt{2} - 1) + 8 - \frac{11\sqrt{2}}{2}$.

Moreover, $r_{even}$ corresponds to a maximum of the cubic function \eqref{eq:pos_even} since the other root is smaller than $r_{even}$ and the derivate at $ \frac{n}{2}$ (a point bigger than $r_{even}$) is negative. Indeed, when $k = \frac{n}{2}$, the derivative \eqref{eq:dpos_even} becomes
$$
\frac{-3 n ^{2} + 24 n - 80}{48} \cdot
$$

Suppose now that $n-k$ is odd, then,
\begin{equation} \label{eq:pos_odd}
\posm{\sigma}{\hnkk{n}{k}} = \frac{-k^3}{12}  + \left( \frac{8-n}{4}\right) k^2 + \left( \frac{3n^2-18n-29}{12} \right) k + \frac{n^3-4n^2+13n+2}{4} \cdot
\end{equation}
A similar analysis gives only one positive root $r_{odd}$ corresponding to a maximum, that is
$$
r_{odd} = 8-n + \sqrt{2 n^2-22n+\frac{326}{6}} \cdot
$$
Observe that $r_{odd} < r_{even} < r$ and these three values are very closes (they differ only from a constant number of sixth within the square root). More precisely, if $n = 9$,
$$
r - r_{odd} = \sqrt{\frac{49}{2}} - \sqrt{\frac{55}{3}} \simeq 0.668,
$$
and this difference decreases and converges to zero when $n$ grows~:
\begin{equation*}
\begin{array}{cll}
& \displaystyle \lim_{n \rightarrow \infty}  n (\sqrt{2} - 1) + 8 - \frac{11\sqrt{2}}{2} - 8  + n - \sqrt{2n^2 - 22n + \frac{326}{6}} &\\[2em]
=& \displaystyle \lim_{n \rightarrow \infty}  \sqrt{2} n - \frac{11\sqrt{2}}{2}  - \sqrt{2n^2 - 22n + \frac{326}{6}}&\\[2em]
=& \displaystyle \lim_{n \rightarrow \infty} \frac{\sqrt{2}}{2} \left(11 - \frac{163}{6n} \right) - \frac{11\sqrt{2}}{2}\\[1em]
=& 0
\end{array}
\end{equation*}

Observe that $r_{odd}$ is irrational. Thus, $\lfloor r_{odd} \rfloor = \lceil r_{odd} \rceil - 1$. It is also the case for $r_{even}$ and $r$. By convergence, we have either $\lfloor r_{odd} \rfloor = \lfloor r \rfloor$ or $\lfloor r_{odd} \rfloor = \lfloor r \rfloor - 1$. Suppose first that $\lfloor r_{odd} \rfloor = \lfloor r \rfloor$. Then,
$$
\lfloor r_{odd} \rfloor = \lfloor r_{even} \rfloor = \lfloor r \rfloor \textrm{ and } \lceil r_{odd} \rceil = \lceil r_{even} \rceil = \lceil r \rceil,
$$
and the optimal value $k^*$ is clearly
$$
k^* = \max (\posm{\sigma}{\hnkk{n}{\lfloor r \rfloor}}, \posm{\sigma}{\hnkk{n}{\lceil r \rceil}}),
$$
whatever the parity of $n-k$ is. 

Suppose now that $\lfloor r_{odd} \rfloor \neq \lfloor r \rfloor$, \ie $\lceil r_{odd} \rceil = \lfloor r \rfloor$. Since $r - r_{odd}$ converges to zero, we have that $\lceil r_{odd} \rceil$ converges to $r_{odd}$ when $n$ grows, that is, converges to the maximum. In this case, $k^* = \lceil r_{odd} \rceil = \lfloor r \rfloor$. One can argue that for small values of $n$, this could possibly not hold. However, we checked it by computation up to $r - r_{odd} < 10^{-6}$.
\end{proof}

This concludes the proof of Theorem \ref{thm:bags} stating that among all bags of order $n \geqslant 11$, \hnk\ has a maximum price of symmetrisation. 

Conjecture \ref{conj:price-trans} states that when $n \leqslant 10$, the cycle \cn~has a maximum price of symmetrisation and when $n \geqslant 11$, the bag \hnk is extremal. We show now that when $4 \leqslant n \leqslant 10, \posm{\sigma}{\cn} > \posm{\sigma}{\hnk}$, and that the opposite happens when $n \geqslant 11$. 

The polynomial $\IP(n) = \posm{\sigma}{\cn} - \posm{\sigma}{\hnkk{n}{r}}$ can be written 
\begin{equation*}
\left( \frac{5 - 4\sqrt{2} }{12}\right) n^3 + \left( \frac{11\sqrt{2} - 14}{2} \right) n^2 + \left( \frac{944 - 707 \sqrt{2}}{24}\right) n + \frac{2453 \sqrt{2} - 3408}{48}
\end{equation*}

Since $\IP(n)$ has an odd degree, we know it has at least one real root. Moreover, using the Tschirnhaus transformation followed by Scipione del Ferro and Tartaglia method, since the discriminant $ \Delta > 0$, $\IP(n)$ has three distinct real non rational roots. More particularly, we have
\begin{equation*}
\begin{array}{lllllll}
\IP(0) & > &  0 & \wedge & \IP(1) & < & 0, \\
\IP(3) & < &  0 & \wedge & \IP(4) & > & 0, \\
\IP(10) & > &  0 & \wedge & \IP(11) & < & 0. \\
\end{array}
\end{equation*}

Since $\IP(n)$ is continuous, we know these three roots lie in these unit intervals, and we can immediately conclude that $\posm{\sigma}{\cn} > \posm{\sigma}{\hnk}$ when $n \leqslant 10$ and $\posm{\sigma}{\cn} < \posm{\sigma}{\hnk}$ when $n \geqslant 11$.

\subsection{Removing induced $\cnn{2}$}\label{subsec:kill-c2}


As previously stated, this section is devoted to the task of designing transformations able to remove induced $\cnn{2}$ in graphs $G$ while increasing its price of symmetrisation. Indeed, such a configuration does not exist in the assumed extremal graph.

The following notation will be helpful to deal with most of this cases.

\begin{nott}
Let $G \in \gn$ strongly connected, an arrow $a$ of $G$ is said \emph{non critical} if
\begin{enumerate}
\item $G - a$ is strongly connected,
\item $\sigma(G) - \sigma(\nor{G}) < \sigma(G - a) - \sigma(\nor{G - a})$.
\end{enumerate}
\end{nott}

A first obvious transformation is then to remove all non critical arrows from a graph $G$. We will call such a graph \emph{critical}. This leads directly to the next useful lemma.  

\begin{lem}
Let $G \in \gn$ strongly connected and critical. If $G$ has an induced $\cnn{2}$, then both arrows of this $\cnn{2}$ are bridges.
\end{lem}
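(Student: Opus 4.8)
The plan is to argue by contradiction: suppose $G$ is strongly connected and critical, that it contains an induced $\cnn{2}$ on vertices $u$ and $v$ (so both arrows $(u,v)$ and $(v,u)$ are present), and suppose that at least one of these two arrows, say $(u,v)$, is \emph{not} a bridge. I want to derive a contradiction with criticality, i.e. show that $(u,v)$ is in fact a non critical arrow, which would mean $G$ was not critical after all.

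First I would check condition (1) in the definition of non critical: since $(u,v)$ is assumed not to be a bridge, $G - (u,v)$ is strongly connected by definition of bridge. So condition (1) is for free, and the whole content is in condition (2), namely the inequality $\sigma(G) - \sigma(\nor{G}) < \sigma(G - (u,v)) - \sigma(\nor{G - (u,v)})$. The key observation is that $\nor{G - (u,v)} = \nor{G}$: removing one arrow of an induced $\cnn{2}$ leaves the \emph{other} arrow $(v,u)$ in place, so the symmetric closure still contains both $(u,v)$ and $(v,u)$, hence is unchanged. Thus the right-hand term $\sigma(\nor{G-(u,v)})$ equals $\sigma(\nor{G})$, and condition (2) collapses to the much simpler statement $\sigma(G) < \sigma(G - (u,v))$. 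So I am reduced to showing that deleting an arrow of a strongly connected digraph can only increase (or keep equal) its transmission — and in fact strictly increase it here.

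Next I would establish $\sigma(G) < \sigma(G - (u,v))$. Deleting an arrow never creates a shorter path between any ordered pair, so $|x,y|_{G-(u,v)} \geqslant |x,y|_G$ for all ordered pairs $(x,y)$, giving $\sigma(G) \leqslant \sigma(G-(u,v))$ immediately. For strictness I need one ordered pair whose distance strictly increases. The natural candidate is the pair $(u,v)$ itself: in $G$ we have $|u,v|_G = 1$ via the deleted arrow, and in $G - (u,v)$ any $u$-$v$ path must use at least one intermediate vertex — it cannot be the arrow $(u,v)$ (deleted) and cannot be $(v,u)$ (wrong direction) — so $|u,v|_{G-(u,v)} \geqslant 2 > 1 = |u,v|_G$. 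Hence $\sigma(G) < \sigma(G-(u,v))$, which combined with $\nor{G-(u,v)} = \nor{G}$ shows $(u,v)$ is non critical, contradicting that $G$ is critical. The same argument applies verbatim to $(v,u)$, so in a critical $G$ both arrows of every induced $\cnn{2}$ must be bridges.

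I do not expect a genuine obstacle here; the only point requiring a little care is the claim $|u,v|_{G-(u,v)} \geqslant 2$, which relies on the $\cnn{2}$ being \emph{induced} only in the weak sense that we are deleting exactly one of its two arrows and no parallel arrows exist (the paper works with simple digraphs), and on strong connectivity of $G - (u,v)$ to guarantee such a path of length $\geqslant 2$ actually exists. Both are given. If one wanted extra robustness one could note that $u$ and $v$ are distinct and $(u,v)$ is the unique arrow from $u$ to $v$ in the simple digraph $G$, so its removal forces length at least $2$.
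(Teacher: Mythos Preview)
Your proof is correct and follows exactly the same approach as the paper: assume one arrow of the induced $\cnn{2}$ is not a bridge, observe that $\nor{G-a}=\nor{G}$ and $\sigma(G-a)>\sigma(G)$, and conclude that $a$ is non critical, contradicting criticality. Your write-up is in fact more careful than the paper's, since you explicitly justify the strict inequality $\sigma(G-a)>\sigma(G)$ via the pair $(u,v)$.
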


\begin{proof}
Indeed, if $G$ has an induced $\cnn{2}$ such as one of its arrows $a$ is not a bridge, then $\sigma(G - a) - \sigma(\nor{G - a}) > \sigma(G) - \sigma(\nor{G})$ since $\sigma(G - a) > \sigma(G)$ and $\sigma(\nor{G - a}) = \sigma(\nor{G})$. This means that removing $a$ increases price of symmetrisation, \ie $a$ is non critical, a contradiction by definition of $G$.
\end{proof}

The next lemmas explains how, in most of the time, to define transformations removing $\cnn{2}$ bridges from graphs. This would allow us only to consider critical $\cnn{2}$-free graphs.

\begin{remark}
We note that if some strongly connected graph $G$ as a $\cnn{2}$ bridge between two vertices $x$ and $y$, then this bridge partitions vertices of $G$ in two sets $X$ and $Y$ such as, without loss of generality, $x \in X$, $y \in Y$ and no arrow links $X$ and $Y$ but $(x,y)$ and $(y,x)$. Moreover, both $G[X]$ and $G[Y]$ are strongly connected. We note such a graph $G=(V,A,X,Y,x,y)$. This configuration is illustrated on Figure \ref{fig:bridge}.
\end{remark}

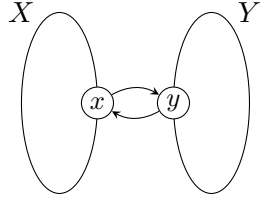
\begin{figure}[!htpd]
\begin{center}
\begin{tikzpicture}
\draw (-0.5, 0) ellipse (0.5 and 1.2);
\draw (1.5, 0) ellipse (0.5 and 1.2);
\draw (-1, 1.2) node {$X$};
\draw (2, 1.2) node {$Y$};
\node[draw, _vertex] (v1) at (0, 0) {\small $x$};
\node[draw, _vertex] (v2) at (1, 0) {\small $y$};
\foreach \i/\j in {1/2, 2/1} {\draw[_arc] (v\i) to[bend left] (v\j);}
\end{tikzpicture}
\end{center}
\caption{A strongly connected graph $G=(V,A,X,Y,x,y)$ partitioned by a double bridge $(x,y)$ and $(y,x)$.}
\label{fig:bridge}
\end{figure}

In the case of such graphs, it is easy to decompose transmission computation in function of transmissions of $G[X]$ and $G[Y]$, as it is in the following lemma. 

\begin{lem}\label{lem:dist-bridge}
Let $G=(V,A,X,Y,x,y) \in \gn$ strongly connected such as $|X| = n_1, |Y| = n_2$. We have
\begin{equation*}
\sigma(G) = \sigma(X) + \sigma(Y) + n_1 \Big(\sigma(Y,y) + \sigma(y,Y)  \Big)  + n_2 \Big(\sigma(X,x) + \sigma(x, X)  \Big) + 2n_1n_2.
\end{equation*}
\end{lem}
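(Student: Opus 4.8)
The plan is to prove the distance decomposition by sorting all ordered pairs of vertices of $G$ according to which side of the double bridge their endpoints lie on, and using the fact that every shortest path crossing the bridge must pass through the corresponding cut vertex ($x$ when going from $X$ to $Y$, and $y$ when going from $Y$ to $X$). Concretely, write $\sigma(G) = \sum_{u,v \in V} |u,v|_G$ and partition the sum into four blocks: pairs with both endpoints in $X$, pairs with both endpoints in $Y$, pairs from $X$ to $Y$, and pairs from $Y$ to $X$.

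First I would deal with the two "internal" blocks. Since $(x,y)$ and $(y,x)$ are the only arrows between $X$ and $Y$, removing the vertices of $Y$ does not disconnect $G[X]$, and any shortest path between two vertices of $X$ can be taken inside $G[X]$: leaving $X$ would require using $(x,y)$ and then $(y,x)$ to come back, which costs at least $2$ and reaches no new vertices, so it is never shorter. Hence $|u,v|_G = |u,v|_{G[X]}$ for $u,v \in X$, giving $\sum_{u,v\in X}|u,v|_G = \sigma(X)$, and symmetrically $\sigma(Y)$ for the $Y$ block.

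Next I would handle the crossing blocks. For $u \in X$ and $v \in Y$, every path from $u$ to $v$ must use the arrow $(x,y)$ (the only arrow leaving $X$), so a shortest $u$-$v$ path decomposes as a shortest $u$-$x$ path in $G[X]$, the arrow $(x,y)$, and a shortest $y$-$v$ path in $G[Y]$; thus $|u,v|_G = |u,x|_{G[X]} + 1 + |y,v|_{G[Y]}$. Summing over all $n_1 n_2$ such pairs yields $n_2\,\sigma(X,x) + n_1 n_2 + n_1\,\sigma(y,Y)$. The symmetric computation for $u \in Y$, $v \in X$ (paths forced through $(y,x)$) gives $n_1\,\sigma(Y,y) + n_1 n_2 + n_2\,\sigma(x,X)$. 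Adding the four blocks produces exactly $\sigma(X) + \sigma(Y) + n_1(\sigma(Y,y)+\sigma(y,Y)) + n_2(\sigma(X,x)+\sigma(x,X)) + 2n_1 n_2$, as claimed.

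The only genuinely non-routine point is justifying rigorously that shortest paths behave as asserted — that one can always stay inside $G[X]$ (resp. $G[Y]$) for internal pairs, and that crossing paths factor through the unique bridge arrow in the right direction. This rests entirely on the bridge being the sole connection between the two sides together with strong connectivity of $G[X]$ and $G[Y]$ (recorded in the preceding remark), so I expect this to be the main thing to state carefully; the remaining bookkeeping is just collecting the four sums.
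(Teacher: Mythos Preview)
Your proof is correct and follows essentially the same approach as the paper: partition $\sigma(G)$ into the four blocks $\sigma(X)$, $\sigma(Y)$, $\sigma(X,Y)$, $\sigma(Y,X)$ and expand the crossing terms via $|v_1,v_2| = |v_1,x| + 1 + |y,v_2|$ (and symmetrically). You are actually more careful than the paper, which takes the internal-block identities and the forced factorisation through the bridge for granted without the justification you supply.
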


\begin{proof}
By separating computations of $\sigma(G)$ in $X$ et $Y$, we have :
\begin{eqnarray*}
\sigma(G) &=&  \sigma(X) + \sigma(Y) + \sigma(X,Y) + \sigma(Y,X) \\
                 &=& \sigma(X) + \sigma(Y) + \displaystyle \sum_{v_1 \in X} \sum_{v_2 \in Y} \Big(|v_1,x| + |x,y| + |y,v_2|\Big) \\
                 &&+ \sum_{v_2 \in Y} \sum_{v_1 \in X} \Big(|v_2,y| + |y,x| + |x,v_1|\Big) \\[1em]
                 &=& \sigma(X) + \sigma(Y) + n_2 \sigma(X,x) + n_1n_2 + n_1 \sigma(y,Y) \\
                 && + n_1 \sigma(Y,y) + n_1n_2 + n_2 \sigma(x,X) \\[1em]
                 &=& \sigma(X) + \sigma(Y) + n_1 \Big(\sigma(Y,y) + \sigma(y,Y)  \Big)  \\
		 && + n_2 \Big(\sigma(X,x) + \sigma(x, X)  \Big) + 2n_1n_2.
\end{eqnarray*}
\end{proof}

The next two lemmas allows us to remove $\cnn{2}$ bridges in most of the cases, the remaining ones are dealt with later. 

\begin{lem}\label{lem:break-c2-no-hyp}
Let $G=(V,A,X,Y,x,y) \in \gn$ strongly connected. Let $H\in \gn$ strongly connected such as $H = G - (y,x) + (y',x')$ with $x' \in X$ and $y' \in Y$ such as $\sigma(x',X)$ and $\sigma(Y,y')$ are maximum. We have
\begin{equation*}
\posm{\sigma}{H} \geqslant \posm{\sigma}{G}
\end{equation*}
with equality if and only if $x=x'$ and $y=y'$. By $\sigma(x',X)$ maximum, we mean $\forall x'' \in X, \sigma(x'', X) \leqslant \sigma(x',X)$. A similar interpretation holds for $\sigma(Y,y')$.
\end{lem}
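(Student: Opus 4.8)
The statement compares two graphs that differ only in where the "return" arrow of the double bridge sits: $G$ has arrows $(x,y)$ and $(y,x)$ linking $X$ and $Y$, while $H$ replaces $(y,x)$ by an arrow $(y',x')$ chosen so that $\sigma(x',X)$ and $\sigma(Y,y')$ are as large as possible. Note first that $H$ is of the same type: $H=(V,A',X,Y,x,y')$ with double-bridge endpoints $x$ on the $X$-side (unchanged, since the only remaining $X\!-\!Y$ arrows are $(x,y)$ and $(y',x')$) and $y'$ on the $Y$-side — wait, more carefully, $(x,y)$ still goes $X\to Y$ and $(y',x')$ goes $Y\to X$, so $H$ is partitioned by the pair $(x,y)$, $(y',x')$ in the sense of the preceding remark, with the roles of the "exit" and "entry" vertices on each side now possibly different. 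The key tool is Lemma~\ref{lem:dist-bridge}, applied to both $G$ and $H$. The plan is to write $\posm{\sigma}{G}=\sigma(G)-\sigma(\nor G)$ and $\posm{\sigma}{H}=\sigma(H)-\sigma(\nor H)$ using that decomposition and show the difference is nonnegative.

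First I would observe that symmetrisation kills the orientation of the bridge: $\nor G$ and $\nor H$ both have the two arrows $(x,y),(y,x)$ present as a symmetric pair (in $\nor H$ the pair $(y',x')$ also becomes symmetric, and if $x'\neq x$ or $y'\neq y$ this is an \emph{extra} symmetric edge inside the "neck"). Actually the cleanest route: in $\nor G$ the only connection between $X$ and $Y$ is the symmetric edge $xy$; in $\nor H$ it is the symmetric edge $xy$ \emph{plus} possibly the symmetric edge $x'y'$, so $\sigma(\nor H)\leqslant \sigma(\nor G)$, and moreover $\nor{G[X]}=\nor{H[X]}$, $\nor{G[Y]}=\nor{H[Y]}$ since we only touched a bridge. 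Then it suffices to show $\sigma(H)\geqslant\sigma(G)$ with the stated equality condition, because the symmetrised part only helps $H$. Here I would plug into Lemma~\ref{lem:dist-bridge}: $\sigma(G)=\sigma(X)+\sigma(Y)+n_1(\sigma(Y,y)+\sigma(y,Y))+n_2(\sigma(X,x)+\sigma(x,X))+2n_1n_2$, and the analogous expression for $H$ with the bridge endpoints on the $X$-side being $x$ (exit) and $x'$ (entry) and on the $Y$-side $y'$ (exit) and $y$ (entry). Since $G[X]=H[X]$ and $G[Y]=H[Y]$ as induced subdigraphs, $\sigma(X)$, $\sigma(Y)$ are unchanged, and the $2n_1n_2$ term is unchanged; the difference reduces to comparing $n_2(\sigma(X,x)+\sigma(x,X)) + n_1(\sigma(Y,y)+\sigma(y,Y))$ against $n_2(\sigma(X,x')+\sigma(x,X)) + n_1(\sigma(Y,y)+\sigma(y',Y))$ — i.e. the change is $n_2(\sigma(X,x')-\sigma(X,x)) + n_1(\sigma(y',Y)-\sigma(y,Y))$, which is $\geqslant 0$ by the maximality choice of $x'$ and $y'$, with equality iff $\sigma(X,x')=\sigma(X,x)$ and $\sigma(y',Y)=\sigma(y,Y)$; one then argues this forces $x'=x$, $y'=y$ (or at least that the default choice $x'=x,y'=y$ is available, giving equality exactly in that case).

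The \textbf{main obstacle} I anticipate is bookkeeping which of the four "one-sided transmission" terms actually change when the entry vertex on a side moves: in $G$ the vertex $x$ is both the exit point (arrows leave $X$ via $x\to y$) and the entry point (arrows enter $X$ via $y\to x$), whereas in $H$ the exit point of $X$ is still $x$ but the entry point becomes $x'$; so the term $n_2\sigma(X,x)$ — the cost of reaching $X$-vertices \emph{from} the bridge — changes to $n_2\sigma(X,x')$, while $n_2\sigma(x,X)$ — the cost of leaving $X$ \emph{to} the bridge — is unchanged, and symmetrically on $Y$. I would need to be careful that Lemma~\ref{lem:dist-bridge} as stated assumes a single pair of endpoints $x,y$; for $H$ I should either restate it with possibly distinct in/out endpoints on each side (an immediate variant of the same computation) or re-derive the needed decomposition directly. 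Once that is pinned down, the inequality is immediate from maximality of $\sigma(x',X)$ — here I read it as $\sigma(X,x')$ being maximal over entry vertices, matching the lemma's phrasing "$\forall x''\in X,\ \sigma(x'',X)\leqslant\sigma(x',X)$" — and of $\sigma(Y,y')$, and the equality analysis is routine. A secondary point to check is that $H$ is indeed strongly connected and genuinely of the bridge form, which is given in the hypothesis, so no work is needed there.
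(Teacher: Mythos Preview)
Your plan is essentially the paper's proof: decompose $\sigma(G)$ via Lemma~\ref{lem:dist-bridge}, re-derive the analogous decomposition for $H$ directly (with distinct entry/exit vertices on each side, exactly as you anticipate), subtract, and observe that the surviving cross terms $n_2\big(\sigma_G(x',X)-\sigma_G(x,X)\big)+n_1\big(\sigma_G(Y,y')-\sigma_G(Y,y)\big)$ are nonnegative by the maximality hypothesis while $\sigma(\nor H)\le\sigma(\nor G)$ because $\nor H$ carries the extra symmetric edge $\{x',y'\}$, with strict inequality there giving the equality case. Your only slip is interchanging the meanings of $\sigma(X,x)$ and $\sigma(x,X)$ in the narrative---under the paper's convention $\sigma(x,X)=\sum_{v\in X}|x,v|_G$ is the ``from the entry point into $X$'' term and is the one that becomes $\sigma(x',X)$---but once that bookkeeping is straightened out your argument coincides with the paper's.
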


\begin{proof}
By separating each computations of $\sigma$, we have :
\begin{equation*}
\begin{array}{lcl}
\sigma(G) & = & \sigma_G(X) + \sigma_G(Y) + n_1 \Big(\sigma_G(Y,y) + \sigma_G(y,Y)  \Big) + n_2 \Big(\sigma(_GX,x) + \sigma_G(x, X)  \Big) \\
 		&& + 2n_1n_2  \textrm{~~~~~~~~by Lemma \ref{lem:dist-bridge}}\\[1em]
\sigma(\nor{G}) &=& \sigma_{\nor{G}}(X) + \sigma_{\nor{G}}(Y) + 2 \sigma_{\nor{G}}(X,Y)\\[1em]
\sigma(H) &=& \sigma_G(X) + \sigma_G(Y) + \sigma_H(X,Y) + \sigma_H(Y,X) \\
          &=& \sigma_G(X) + \sigma_G(Y) + \displaystyle \sum_{v_1 \in X} \sum_{v_2 \in Y} \Big(|v_1,x|_G + |x,y|_G + |y,v_2|_G\Big) \\
                 &&+ \displaystyle \sum_{v_1 \in Y} \sum_{v_2 \in X} \Big(|v_2,y'|_G + |y',x'|_H + |x',v_1|_G\Big) \\[2em]
	  &=& \sigma_G(X) + \sigma_G(Y) + n_2 \sigma_G(X,x) + n_1n_2 + n_1 \sigma_G(y,Y) \\
                 && + n_1 \sigma_G(Y,y') + n_1n_2 + n_2 \sigma_G(x',X) \\[0.5em]
          &=& \sigma_G(X) + \sigma_G(Y) + n_1 \Big(\sigma_G(Y,y') + \sigma_G(y,Y)  \Big) \\
		 && + n_2 \Big(\sigma_G(X,x) + \sigma_G(x', X)  \Big) + 2n_1n_2.\\[1em]
\sigma(\nor{H}) &=& \sigma_{\nor{G}}(X) + \sigma_{\nor{G}}(Y) + 2 \sigma_{\nor{H}}(X,Y)
\end{array}
\end{equation*}

We have then :
\begin{equation*}
\begin{array}{lcl}
&& \sigma(H) - \sigma(\nor{H}) - \sigma(G) + \sigma(\nor{G}) \geqslant 0\\[0.5em]
&\ssi& \sigma_G(X) + \sigma_G(Y) + n_1 \Big(\sigma_G(Y,y') + \sigma_G(y,Y)  \Big) + n_2 \Big(\sigma_G(X,x) + \sigma_G(x', X)  \Big) \\
	    && + 2n_1n_2 - \sigma_G(X) - \sigma_G(Y) - n_1 \Big(\sigma_G(Y,y) + \sigma_G(y,Y)  \Big) \\
            && - n_2 \Big(\sigma_G(X,x) + \sigma_G(x, X)  \Big) - 2n_1n_2 - \sigma_{\nor{G}}(X) - \sigma_{\nor{G}}(Y) - 2 \sigma_{\nor{H}}(X,Y) \\
	    && + \sigma_{\nor{G}}(X) + \sigma_{\nor{G}}(Y) + 2 \sigma_{\nor{G}}(X,Y) ~~~~~~~\geqslant 0\\[1em]
&\ssi& n_2 \Big(\sigma_G(x',X) - \sigma_G(x,X) \Big) + n1 \Big(\sigma_G(Y,y') - \sigma_G(Y,y) \Big)\\
	    && + 2 \Big( \sigma_{\nor{G}}(X,Y) - \sigma_{\nor{H}}(X,Y) \Big)~~~~~~~\geqslant 0
\end{array}
\end{equation*}
We note that $\sigma_{\nor{G}}(X,Y) - \sigma_{\nor{H}}(X,Y) > 0$ if and only if $(y',x') \neq (y,x)$ since there are two more arrows in $\nor{H}$ than in $\nor{G}$. Moreover, equality holds if and only if $(y',x') = (y,x)$.

On the other hand, as $x'$ is chosen such as $\sigma(x',X)$ is maximum, we have $\sigma(x',X) - \sigma(x,X) \geqslant 0$. The same argument is valid for $y'$, we hence have $\sigma(H) - \sigma(\nor{H}) \geqslant \sigma(G) + \sigma(\nor{G})$. Equality holds if and only if $(y',x') = (y,x)$.

The property is then verified, with equality if and only if $(y',x') = (y,x)$, that is if and only if $x$ et $y$ are such as $\sigma(x,X)$ and $\sigma(Y,y)$ are maximum.\\
\end{proof}

The following lemma also shows a transformation removing $\cnn{2}$ bridges, while keeping $\posm{\sigma}{G}$ unchanged. It is still useful since combined with Lemma \ref{lem:break-c2-no-hyp}, it allows us to only consider one remaining case of very particular induced $\cnn{2}$. 
In this lemma, $G/(x,y)$ denotes the graph obtained when contracting the arrow $(x,y)$ in $G$.


\begin{lem}\label{lem:bridge-ctr}
Let $G=(V,A,X,Y,x,y) \in \gn$ strongly connected. Let $H\in \gn$ strongly connected such as $H = G/(x,y) + (z,w) + (w,z)$, where $z$ is the vertex obtained when contracting $(x,y)$ in $G$ and $w$ is a new vertex. This situation is illustrated on Figure \ref{sub:contracted}. We have then
\begin{equation*}
\posm{\sigma}{G} = \posm{\sigma}{H}.
\end{equation*}
\end{lem}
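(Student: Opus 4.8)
The plan is to compute $\sigma(H)$, $\sigma(\nor H)$, $\sigma(G)$ and $\sigma(\nor G)$ in terms of the same building blocks and show that the gap is unchanged. The key observation is that in $H$ the pair $(z,w)$ with its double arrow behaves exactly like a $\cnn{2}$ bridge separating the vertex set into $X' = (X\setminus\{x\})\cup\{z\}$ and $Y' = Y\setminus\{y\}$ on one side versus the single new vertex $w$ on the other; but more usefully, one checks that the roles of $X$ and $Y$ in $G$ are mirrored in $H$ by the pair $(X\cup\{w\}\text{-side})$ vs.\ nothing, so I would instead set up the bookkeeping directly. Concretely, write $n_1=|X|$, $n_2=|Y|$ (so $|V|=n_1+n_2$, and $H$ also has order $n_1+n_2$ since contracting removes a vertex and adding $w$ restores one). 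First I would record, via Lemma~\ref{lem:dist-bridge} applied to $G=(V,A,X,Y,x,y)$,
\begin{equation*}
\sigma(G) = \sigma_G(X) + \sigma_G(Y) + n_1\bigl(\sigma_G(Y,y)+\sigma_G(y,Y)\bigr) + n_2\bigl(\sigma_G(X,x)+\sigma_G(x,X)\bigr) + 2n_1n_2 .
\end{equation*}

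Next I would analyse distances in $H$. Contracting $(x,y)$ identifies $x$ and $y$ into $z$; because $(x,y)$ and $(y,x)$ were the only arrows between $X$ and $Y$, the contraction does not create any shortcut inside $G[X]$ or inside $G[Y]$, so $\sigma_H(X') = \sigma_G(X)$ with the convention that $z$ plays the role of $x$, and similarly on the $Y$ side; moreover $|u,v|_H = |u,x|_G + |y,v|_G - 1$ for $u$ on the $X$-side and $v$ on the $Y$-side (the ``$-1$'' because the bridge arrow of length~$1$ is absorbed into $z$). The new vertex $w$ is attached only to $z$ by a $\cnn{2}$, so $|w,v|_H = 1 + |z,v|_H$ and $|v,w|_H = |v,z|_H + 1$ for every $v\neq w$. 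Collecting these, $\sigma(H)$ decomposes into $\sigma_G(X)+\sigma_G(Y)$, a cross-term that equals $n_1\sigma_G(y,Y)+n_2\sigma_G(X,x)+n_1\sigma_G(Y,y)+n_2\sigma_G(x,X) + 2n_1n_2 - 2n_1n_2$ shifted appropriately, plus the $w$-contributions $\sum_{v}(|w,v|_H+|v,w|_H)$. The point is that the total number of ordered pairs is the same, and every $-1$ from the contraction is compensated by a $+1$ from routing through $w$; I would verify that the net effect is $\sigma(H) = \sigma(G) + c$ for an explicit constant $c$ depending only on $n_1,n_2$ (I expect $c$ to come from the $w$-to-everything distances minus the saved bridge lengths).

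Then I would do the same for the symmetrised graphs. In $\nor G$ the double bridge is already symmetric, and $\nor H$ is obtained from $\nor G$ by the same contract-and-add-pendant-$\cnn{2}$ operation; by the identical distance bookkeeping (now with undirected distances, which only makes the identities cleaner since $|u,v|=|v,u|$), one gets $\sigma(\nor H) = \sigma(\nor G) + c$ with the \emph{same} constant $c$, because $c$ depended only on the sizes $n_1,n_2$ and on the combinatorial structure of the contraction/pendant operation, not on the orientations inside $X$ or $Y$. Subtracting, $\sigma(H)-\sigma(\nor H) = \sigma(G)-\sigma(\nor G)$, i.e.\ $\posm{\sigma}{H}=\posm{\sigma}{G}$, as claimed. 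The main obstacle is the careful distance accounting in the contracted graph: one must be sure that contracting $(x,y)$ genuinely creates no shortcuts (which is where the hypothesis that $(x,y),(y,x)$ are the only $X$--$Y$ arrows is essential) and must track the off-by-one shifts on the cross-terms and the $w$-terms consistently so that the constant $c$ comes out identical in the directed and symmetrised computations; once the two decompositions are written side by side the cancellation is immediate.
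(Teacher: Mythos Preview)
Your plan is essentially the paper's proof: decompose $\sigma(H)$ (and $\sigma(\nor H)$) into the $X$-part, the $Y$-part, the contraction-shifted cross-terms, and the $w$-contributions, then show $\sigma(H)=\sigma(G)+c$ and $\sigma(\nor H)=\sigma(\nor G)+c$ for the same constant $c=c(n_1,n_2)$; the paper carries the bookkeeping out explicitly and finds $c=-2n_1n_2+2n_1+2n_2-2$. One small slip to fix when you execute it: the cross-distance should read $|u,v|_H=|u,x|_G+|y,v|_G$ (the ``$-1$'' is relative to $|u,v|_G$, not to the sum on the right-hand side you wrote), but this does not affect the structure of the argument.
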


\begin{figure}[!htpd]
\begin{center}
\subfigure[The graph $H$]{\label{sub:contracted}
\begin{tikzpicture}
\draw (-0.65, 0) ellipse (0.5 and 1.2);
\draw (0.65, 0) ellipse (0.5 and 1.2);
\draw (-1.1, 1.2) node {$X$};
\draw (1.1, 1.2) node {$Y$};
\node[draw, _vertex] (v1) at (0, 0) {\small $z$};
\node[draw, _vertex] (v2) at (0, -1.5) {\small $w$};
\foreach \i/\j in {1/2, 2/1} {\draw[_arc] (v\i) to[bend left=10pt] (v\j);}
\end{tikzpicture}
}
\subfigure[Distance matrix $M$ of $G$]{\label{sub:mat-g}
\begin{tikzpicture}[scale=0.35]
\draw[color=gray!40, fill=gray!40] (7, 5) -- (7, 10) -- (10,10) -- (10, 5) -- cycle; 
\draw[dotted, thick] (5, 4) -- (5, 5) -- (10, 5);
\draw[dotted, thick] (6, 3) -- (7, 3) -- (7, 10);
\draw (0, 4) -- (10, 4);
\draw (6, 0) -- (6, 10);
\draw (0, 0) -- (10, 0) -- (10, 10) -- (0, 10) -- cycle;
\draw (5.5, 4.5) node {\tiny $0$};
\draw (6.5, 4.5) node {\tiny $1$};
\draw (6.5, 3.5) node {\tiny $0$};
\draw (-1, 11) node {$M$};
\draw (3, 7) node {\tiny $\sigma(X)$};
\draw (3, 2) node {\tiny $\sigma(Y, X)$};
\draw (8, 2) node {\tiny $\sigma(Y)$};
\draw[thick, ->] (-1, 4.5) -- (0, 4.5);
\draw (-1, 4.5) node[left] {\tiny $x$};
\draw[thick, ->] (-1, 3.5) -- (0, 3.5);
\draw (-1, 3.5) node[left] {\tiny $y$};
\draw[thick, ->] (5.5, 11) -- (5.5, 10);
\draw (5.5, 11) node[above] {\tiny $x$};
\draw[thick, ->] (6.5, 11) -- (6.5, 10);
\draw (6.5, 10.9) node[above] {\tiny $y$};
\end{tikzpicture}
}
\subfigure[Distance matrix $M'$ of $H$]{\label{sub:mat-h}
\begin{tikzpicture}[scale=0.35]
\draw[color=gray!40, fill=gray!40] (6, 5) -- (6, 10) -- (9,10) -- (9, 5) -- cycle; 
\draw[dotted, thick] (0,1) -- (0,0) -- (10, 0) -- (10, 10) -- (9, 10);
\draw[dotted, thick] (9, 5) -- (10, 5);
\draw[dotted, thick] (9, 1) -- (10, 1);
\draw[dotted, thick] (9, 1) -- (9, 0);
\draw[dotted, thick] (5, 1) -- (5, 0);
\draw (0, 4) -- (6, 4) -- (6, 10);
\draw (5, 1) -- (5, 5) -- (9, 5);
\draw (0, 1) -- (9, 1) -- (9, 10) -- (0, 10) -- cycle;
\draw (5.5, 4.5) node {\tiny $0$};
\draw (9.5, 0.5) node {\tiny $0$};
\draw (-1, 11) node {$M'$};
\draw (3, 7) node {\tiny $\sigma(X)$};
\draw (7, 3) node {\tiny $\sigma(Y)$};
\draw (7.5, 7.5) node[rotate=60] {\tiny $\sigma$$(X$$-$$x$$,$$Y$$-$$y)$};
\draw[thick, ->] (-1, 4.5) -- (0, 4.5);
\draw (-1, 4.5) node[left] {\tiny $z$};
\draw[thick, ->] (-1, 0.5) -- (0, 0.5);
\draw (-1, 0.5) node[left] {\tiny $w$};
\draw[thick, ->] (5.5, 11) -- (5.5, 10);
\draw (5.5, 11) node[above] {\tiny $z$};
\draw[thick, ->] (9.5, 11) -- (9.5, 10);
\draw (9.5, 11) node[above] {\tiny $w$};
\end{tikzpicture}
}
\end{center}
\caption{Illustration of the graph $H$ and distance matrix comparison.}
\end{figure}
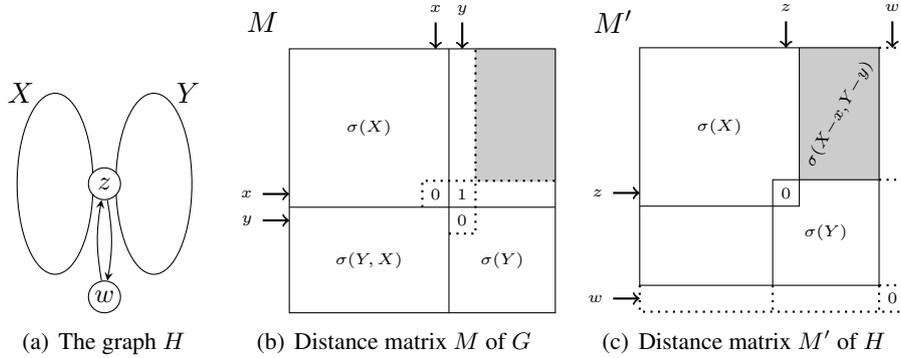

\begin{proof}
Assume the vertices of $X$ are labelled with integers from $1$ to $n_1$ such as $x$ has the label $n_1$ and the vertices of $Y$ are labelled from $n_1 + 1$ to $n$ such that $y$ has the label $n_1 + 1$. In the distance matrix $M$ of $G$, $x$ and $y$ are then two consecutive rows / columns. Moreover, $M$ is partitioned into four submatrices denoting $\sigma(X)$, $\sigma(Y)$, $\sigma(X,Y)$ and $\sigma(Y,X)$, as illustrated in Figure \ref{sub:mat-g}. 

When labelled the same way with $w$ assigned to the highest label, we notice the distance matrix $M'$ of $H$ exhibits a similar structure, illustrated in Figure \ref{sub:mat-h}. The $(x,y)$ bridge of $G$ has been contracted but none of the distances in $X$ and $Y$ have changed. Moreover, shortest paths from a vertex of $X-x$ to a vertex of $Y-y$ have all be shortened by $1$. There are exactly $n_1n_2-n1-n_2+1$ such shortest paths. We can then write $\sigma_H(X-x,Y-y)$ in the following way :

\begin{equation*}
\begin{array}{lcl}
\sigma_H(X-x,Y-y) &=& \sigma_G(X,Y) - (n_1n_2-n1-n_2+1) \\[0.5em]
                  && - \displaystyle \sum_{x' \in X-x} |x',y|_G - \sum_{y' \in Y-y} |x,y'|_G - 1 \\[1.5em]
		  &=& \sigma_G(X,Y) - n_1n_2+n1+n_2-2 \\[0.5em]
                  && - \displaystyle \sum_{x' \in X-x}\Big( |x',x|_G + 1 \Big) - \sum_{y' \in Y-y} \Big( |y,y'|_G + 1 \Big)\\[1.5em]
		  &=& \sigma_G(X,Y) - n_1n_2 - \displaystyle \sum_{x' \in X} |x',x|_G - \sum_{y' \in Y} |y,y'|_G\\[1.5em]
		  && ~~~~~~~~\textrm{since $|x,x|_G = |y,y|_G = 0$}\\
		  &=& \sigma_G(X,Y) - \sigma_G(x,X) - \sigma_G(Y,y) - n_1n_2.
\end{array}
\end{equation*}

We can use similar arguments for $\sigma_H(Y-y,X-x)$ to conclude that 
\begin{equation*}
\sigma_H(Y-y,X-x) = \sigma_G(Y,X) - \sigma_G(X,x) - \sigma_G(y,Y) - n_1n_2.
\end{equation*}

The only remaining distances to consider in $H$ are the distances from $w$ to $X$ and $Y$ (and from $X$ and $Y$ to $w$), as well as the distances from $z$ to $w$ and $w$ to $z$. We can now write $\sigma(H)$ in the following way :

\begin{equation*}
\begin{array}{lcl}
\sigma(H) &=& \sigma_G(X) + \sigma_G(Y) + \sigma_G(X,Y) + \sigma_G(Y,X) - \sigma_G(x,X) - \sigma_G(X,x) \\ 
	  && - \sigma_G(y,Y) - \sigma_G(Y,y) - 2n_1n_2 + \sigma_H(X-x,w) + \sigma_H(w,X-x) \\
	  && + \sigma_H(Y-y,w) + \sigma_H(w,Y-y) + |w,z|_H + |z,w|_H \\[0.5em]
	  &=& \sigma(G) - \sigma_G(x,X) - \sigma_G(X,x) - \sigma_G(y,Y) - \sigma_G(Y,y) - 2n_1n_2 + 2\\
	  && + \displaystyle \sum_{x' \in X-x} \Big( |x',z|_H + |z,x'|_H + 2 \Big) + \sum_{y' \in Y-y} \Big( |y',z|_H + |z,y'|_H + 2 \Big).\\  
\end{array}
\end{equation*}

We note that, in the above formula, we have $|x',z|_H = |x',x|_G$. Similar arguments hold for $|z,x'|$,$|y',z|$ and $|z,y'|$. We have then
\begin{equation*}
\begin{array}{lcl}
\sigma(H) &=& \sigma(G) - \sigma_G(x,X) - \sigma_G(X,x) - \sigma_G(y,Y) - \sigma_G(Y,y) \\
	  && + \displaystyle \sum_{x' \in X-x} \Big( |x',x|_H + |x,x'|_H \Big) + \sum_{y' \in Y-y} \Big( |y',y|_H + |y,y'|_H \Big)\\
	  && - 2n_1n_2 + 2n_1 + 2n_2 - 2\\[0.5em]
	  &=& \sigma(G) - \sigma_G(x,X) - \sigma_G(X,x) - \sigma_G(y,Y) - \sigma_G(Y,y) \\
	  && + \sigma_G(x,X) + \sigma_G(X,x) + \sigma_G(y,Y) + \sigma_G(Y,y) \\
	  && - 2n_1n_2 + 2n_1 + 2n_2 - 2\\
	  && ~~~~~~~~\textrm{since $|x,x|_G = |y,y|_G = 0$}\\[0.5em]
	  &=& \sigma(G) - 2n_1n_2 + 2n_1 + 2n_2 - 2.
\end{array}
\end{equation*}

We note that the above approach to decompose the computation of $\sigma(H)$ is also valid for $\sigma(\nor{H})$. We can then immediately conclude that $\posm{\sigma}{H} = \sigma(G) - \sigma(\nor{G}) = \posm{\sigma}{G}$. 
\end{proof}

We note that the transformation defined in Lemma \ref{lem:bridge-ctr} removes an induced $\cnn{2}$ while creating an other one. The only advantage of using this lemma is then to transform induced $\cnn{2}$ not dealt with previous lemmas into pending induced $\cnn{2}$, a simpler structure. On the other hand, this transformation keeps $\posm{\sigma}{G}$ unchanged. However, if a graph $G$ could be transformed an arbitrary number of times using only Lemma \ref{lem:bridge-ctr}, it would eventually be a tree of induced $\cnn{2}$. Such a graph is symmetric, and has then a null price of symmetrisation. Since Lemma \ref{lem:bridge-ctr} keeps $\posm{\sigma}{G}$ unchanged, it means that $\posm{\sigma}{G} = 0$, and then that $G$ is symmetric as well, and so, not extremal.

In order to completely deal with induced $\cnn{2}$, the only remaining case is a graph $G$ with an induced pending $\cnn{2}$ attached on some vertex $x$ such as both $\sigma(x,X)$ and $\sigma(X,x)$ are maximum. 

\subsection{Contraction - insertion algorithm}\label{subsec:contr-alg}

Regardless of the fact that they are still some induced $\cnn{2}$ we cannot deal with, we were still able to make transformation experiments on $\cnn{2}$-free graphs. On the other hand, when looking at bags structure, motivated by the supposed extremal graph of Conjecture \ref{conj:price-trans}, we notice they all have a possibly long induced path. When dealing with a $\cnn{2}$-free graph $G$, a first intuitive idea is then to lengthen the longest induced path of $G$.

In order to define such a graph transformation, we need to introduce the following notation :
\begin{nott}
Let $G \in \gn$, $P$ the longest induced path of $G$ and $a \in G-P$. We note 
\begin{enumerate}
\item $G/a$ the graph $G$ in which arrow $a$ has been contracted,
\item $G'_a$ the graph $G/a$ in which a vertex has been inserted on $P$.
\end{enumerate}
\end{nott}

We note that if $G$ is strongly connected, then so is $G'_a$ for all $a \in A$. The basic idea behind the graph transformation is then to find the best arrow to contract in $G$ in order to increase price of symmetrisation. More formally, we define this transformation of a graph $G$, noted $T_1(G)$ in the following way :

\begin{algorithm}[!htpd]
\caption{Algorithm $T_1(G)$}
\begin{algorithmic}[1]
\STATE Let $P$ the longest induced path of $G$.
\STATE Let $score(a) := \sigma(G'_a) - \sigma(\nor{G'_a}) - \sigma(G) + \sigma(\nor{G})$.
\STATE $a = \displaystyle \max_{a \in G-P} score(a)$
\IF {$score(a) > 0$}
    \STATE $G := G'_a$
\ENDIF
\end{algorithmic}
\label{alg:contr}
\end{algorithm}

Is is assumed the transformation fails when an arrow of positive score cannot be found. Guided random experiments\footnote{The experiments simply consist of an heuristic graph search, truncated by previously stated lemmas. We also apply these lemmas each time the transformation is applied in order, among other things, to remove non critical arrows and most of induced $\cnn{2}$.} appear to state that such an arrow only exists if $G$ is a bag, or $G$ is a graph with \emph{bunches}. A \emph{bunch} in a digraph is a set of induced path with same start and end points. Samples of experiments leading to graph with bunches are illustrated in Figure \ref{fig:bunch-graph}. Like Conjecture \ref{conj:price-trans}, \DG \cite{DG} was used to automatically find such examples.

\begin{figure}[!htpd]
\begin{center}
\includegraphics[width=4cm]{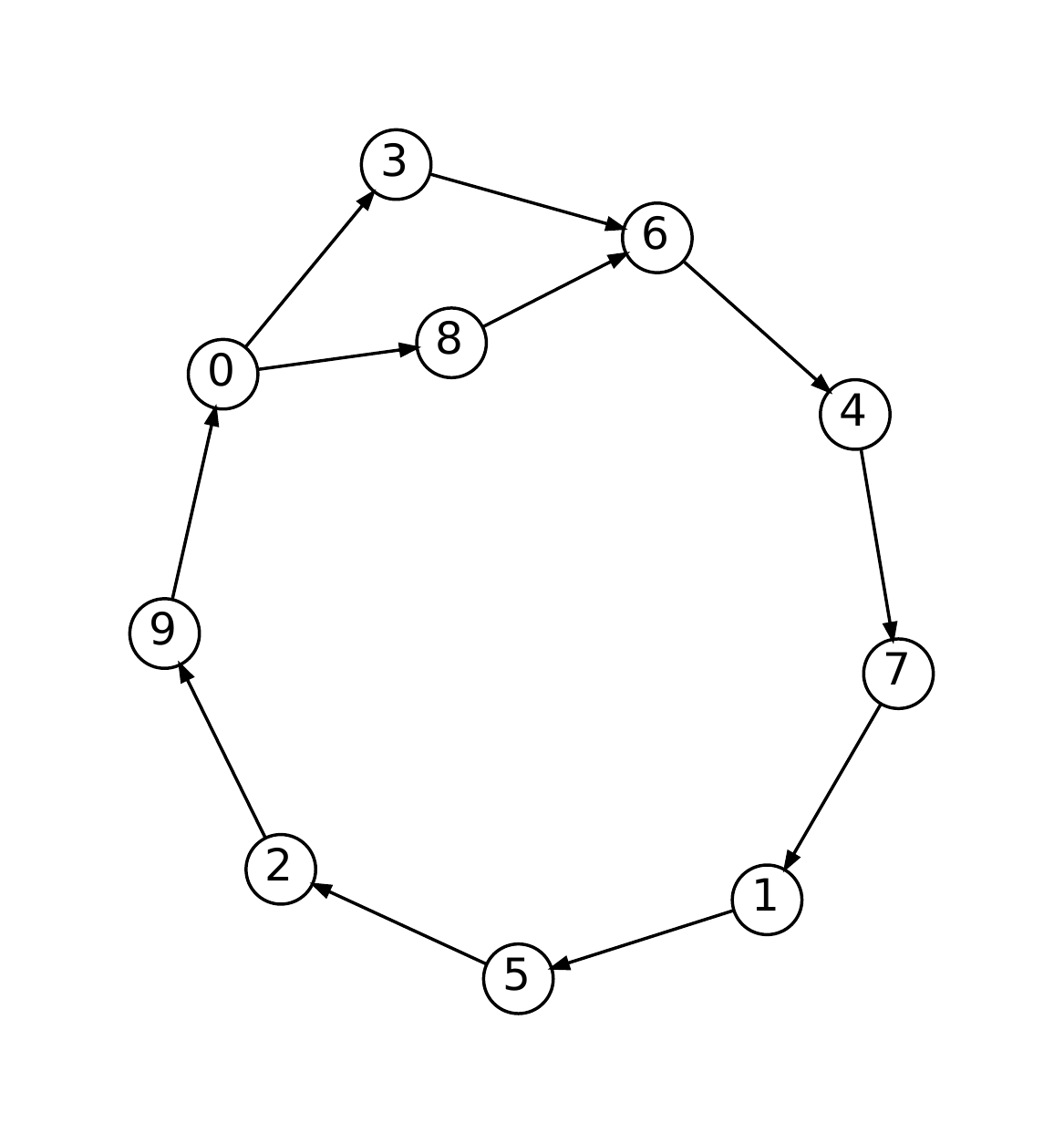} \includegraphics[width=4cm]{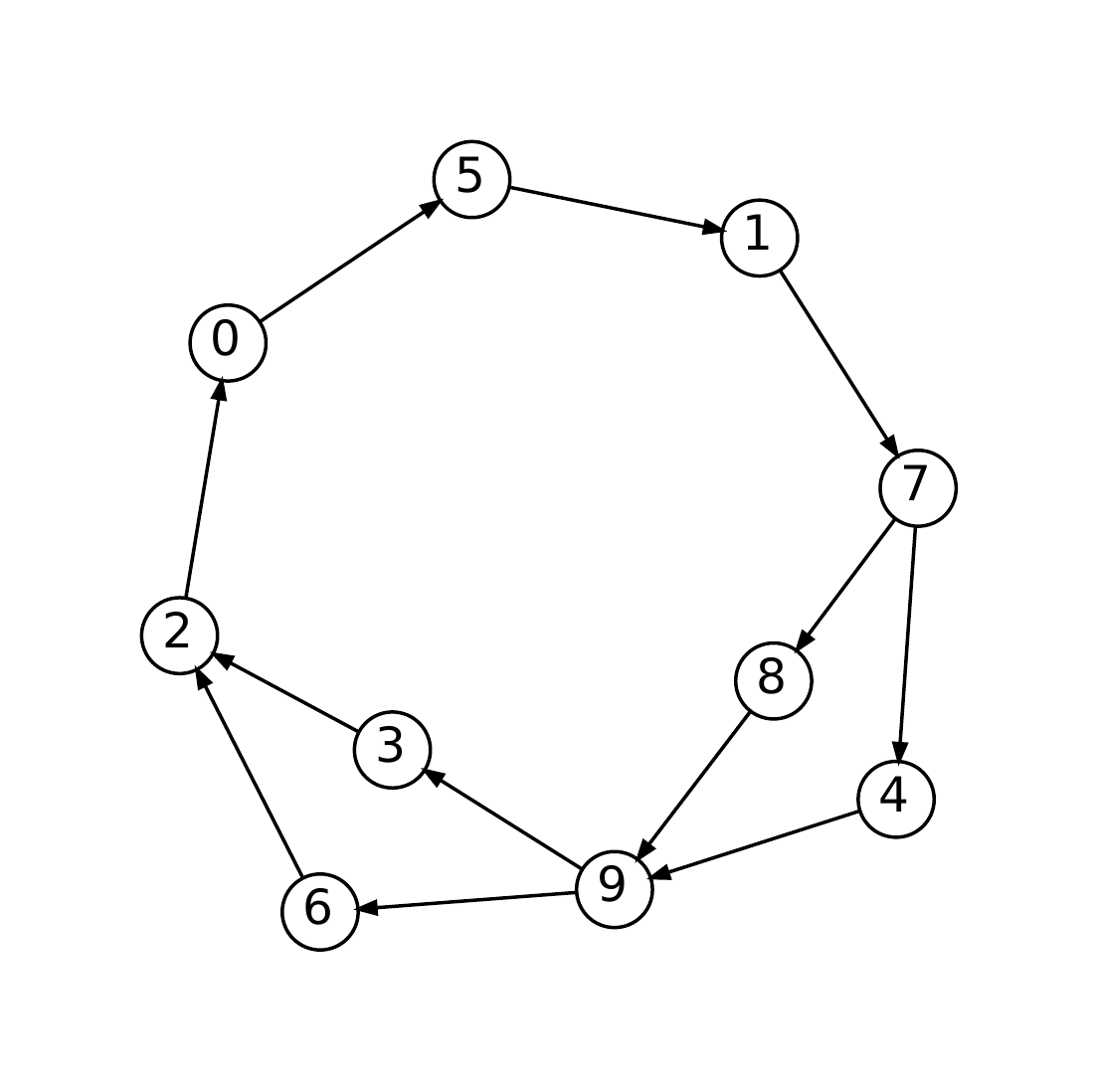} \includegraphics[width=4cm]{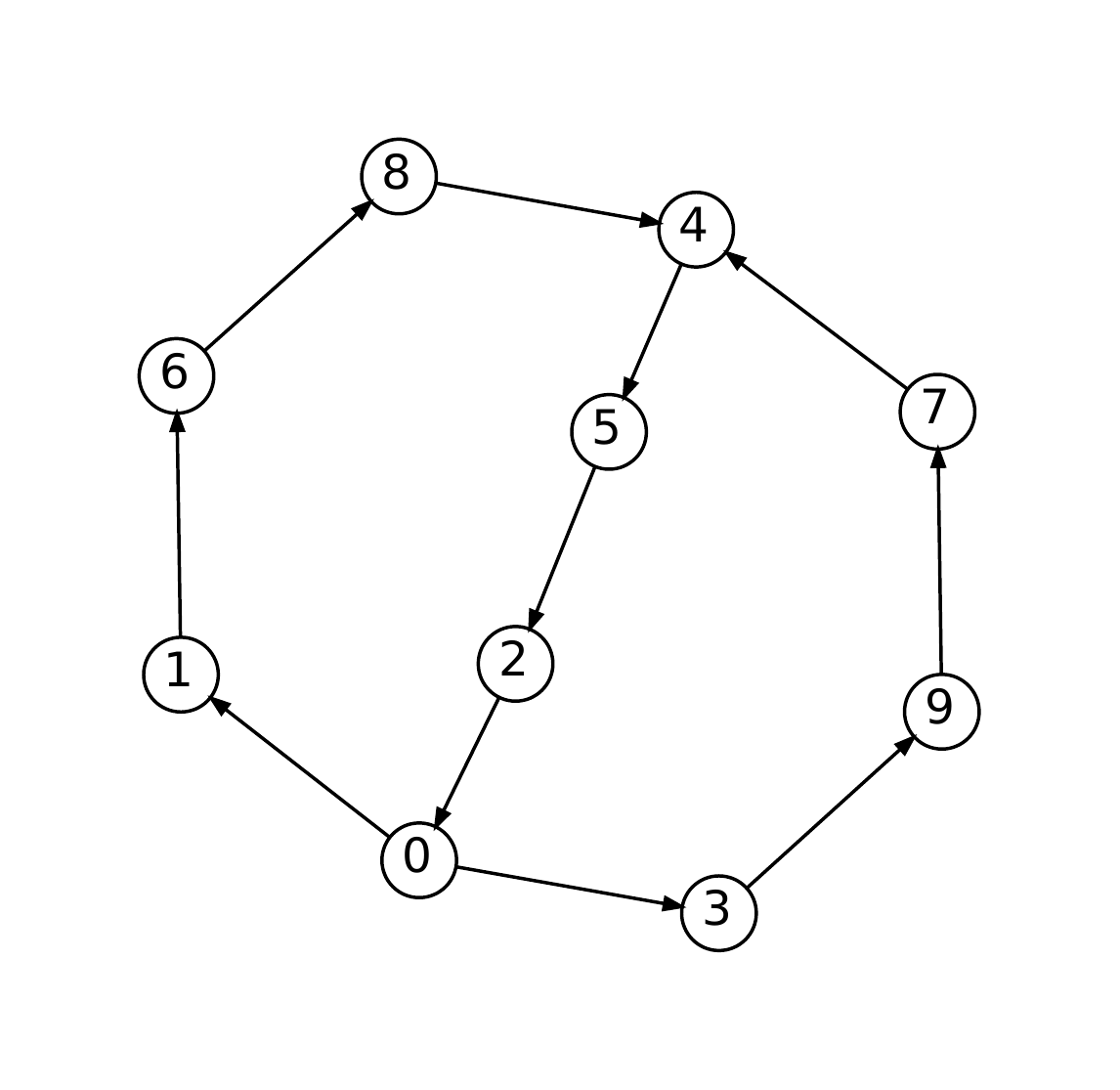}
\end{center}
\caption{Graph with bunches, \ie non bag graphs where $T_1$ fails.}
\label{fig:bunch-graph}
\end{figure}

Since \hnk\ is proved to be the optimal bag, we only have to find another transformation dealing with bunches, \ie find a transformation removing this configuration while increasing price of symmetrisation. However, as for $\cnn{2}$ bridge elimination, we are still unable to deal with bunches.



This result concludes our partial proofs on price of symmetrisation for average distance. As a brief sum up, the following points could be paths of research to prove Conjecture~\ref{conj:price-trans} :
\begin{enumerate}
\item Find a transformation dealing with $\cnn{2}$ bridges that does not fit in Lemma \ref{lem:break-c2-no-hyp}.
\item Prove that the transformation $T_1$ described in Algorithm \ref{alg:contr} always exists when the underlying graph is not a bag or a graph with bunches.
\item Prove that iterative uses of $T_1$ eventually end up on one of these two cases.
\item Find a transformation removing bunches from a graph while increasing price of symmetrisation.
\end{enumerate}

\section{Conclusion}

We have defined the notion of price of symmetrisation, that constitutes a new class of graph invariants for digraphs expressing the gap (or the quotient) of values, for a given invariant, between a digraph and the same digraph that has been symmetrised. We have shown that some extremal questions about the price of symmetrisation are easy (for instance when it concerns the diameter of the domination number) while others are intricate. Indeed, the maximum price of symmetrisation for the average distance and the digraphs achieving it have been conjectured, and although partial results are given, the conjecture remains open for general graphs.

We believe that the notion of price of symmetrisation can lead to other interesting questions since it can be applied to various graph invariants. Moreover, it is a convenient way to express how invariants' values change when one restrict digraphs to be symmetrised (or, conceptually, to be undirected).

\subsection*{Acknowledgements}

The authors would like to thank Alain Hertz, Gilles Caporossi and Hadrien Lepousé for useful discussions about Conjecture~\ref{conj:price-trans}.

\bibliographystyle{acm}
\bibliography{biblio}

\end{document}